\keywords{Constraint programming, multi-objective combinatorial optimization, Pareto algebra, embedded and cyber-physical systems, quality of service}
\newcommand{\definitie}{{\overset{\mathrm{def}}{=}}}
\begin{document}

\title[Interface Modeling for Quality and Resource Management]{Interface Modeling for Quality and Resource Management\rsuper*}

\titlecomment{{\lsuper*}The work described in this paper was supported from the ECSEL Joint Undertaking (JU) under grant agreement No 783162 (FitOpTiVis). The JU receives support from the European Union's Horizon 2020 research and innovation programme and Netherlands, Czech Republic, Finland, Spain, and Italy.}

\author[M.~Hendriks]{Martijn Hendriks{\rsuper{{a,b}}}}
\author[M.~Geilen]{Marc Geilen\rsuper{a}}
\author[K.~Goossens]{Kees Goossens\rsuper{a}}
\author[R.~de Jong]{Rob de Jong\rsuper{c}}
\author[T.~Basten]{Twan Basten{\rsuper{{a,b}}}}
\address{{\lsuper a}Eindhoven University of Technology, Eindhoven, The Netherlands}
\address{{\lsuper b}ESI (TNO), Eindhoven, The Netherlands}
\address{{\lsuper c}Philips Medical Systems International BV, Best, The Netherlands}

\begin{abstract}
We develop an interface-modeling framework for quality and resource management that captures configurable working points of hardware and software components in terms of functionality, resource usage and provision, and quality indicators such as performance and energy consumption.
We base these aspects on partially-ordered sets  to capture quality levels, budget sizes, and functional compatibility. This makes the  framework widely applicable and domain independent  (although we aim for embedded and cyber-physical systems).
The framework paves the way for dynamic (re-)configuration and multi-objective optimization of component-based systems for quality- and resource-management purposes.
\end{abstract}

\maketitle

\section{Introduction}
Component frameworks enable the construction of systems from a set of loosely coupled, independent components through composition mechanisms.
A component framework emphasizes abstraction, separation of concerns, and reuse.
It has inherently loose coupling between components, and explicit and visible dependencies.
This eases, for instance, dynamic reconfiguration and thereby adaptability.
Component-based models of computation can be classified into component models and interface models~\cite{AH01b}.
A component model specifies how a component behaves in an arbitrary environment.
An interface model specifies how a component can be used, e.g., for composition, and restricts the environment: it makes input assumptions and provides output guarantees for the case when the input assumptions are met by the environment.
We present an interface model for (dynamic) quality and resource management (QRM) of component-based systems.
Each component in our framework has one or more \emph{configurations} that may be parameterized. A configuration has an \emph{input}, an \emph{output}, a \emph{required budget}, a \emph{provided budget}, and a \emph{quality}.
Parameters capture the configurable working points of the component.
External actors, the user or a quality and resource manager for instance, can use the parameters to partially control configuration of components for, e.g., dynamic reconfiguration of the system.
The input and output model the functionality of a particular configuration of the component, to the extent relevant for QRM. For instance, a component configuration can have a video stream as input, and for each frame a set of image features as output. The types of extracted features may be adjustable via the parameters.
The required and provided budgets model the services that a component configuration provides or requires.
A component configuration may, for instance, require, or conversely, offer, a volatile storage capacity of 64~MB. 
Finally, the quality of a component configuration refers to the properties that we want to optimize, for instance, latency, throughput, energy consumption, cost. Qualities, budgets, and outputs typically depend on inputs and parameter settings.
The five parts and the parameters form a \emph{quality- and resource-management interface} that can be used for dynamic reconfiguration of the system to ensure functionality and a certain quality of service.
We use concepts from the Pareto Algebra framework~\cite{pareto-algebra} to mathematically define our components and their composition.

\paragraph{Related work.}
Design based on components is a well-known approach to tackle the increasing complexity of today's systems.
The general idea is to use pre-defined components with well-specified behaviors and interfaces as interoperable building blocks. These components can be composed to create larger components with more complex functionality.
Well-known component models in the software domain are the Component Object Model (COM)~\cite{com},
the Common Object Request Broker Architecture (CORBA)~\cite{corba},
JavaBeans~\cite{javabeans},
and Open Services Gateway Initiative (OSGi)~\cite{osgi}. These are all generally applicable and do not depend on the application domain.
An example of a more specialized software component model is the KOALA framework~\cite{koala}.
It is aimed at the software development of a product family of high-end television sets, with an emphasis on dealing with variability.
Platforms that aim at predictability and composability are, for instance, the CompSOC platform~\cite{compsoc} and PRET machines~\cite{pret}. CompSOC aims to reduce system complexity through composability of applications and through (timing) predictability. PRET machines also focus on providing solid timing predictability for processors, thereby making the timing of the composition of programs predictable.
Examples of component frameworks with tool support for cyber-physical systems are the Behavior Interaction Priorities (BIP) framework~\cite {BIB} and the {Ptolemy} framework~\cite{Ptolemy}.
Other conceptual component models are, for instance, the I/O automata~\cite{LT87} and timed I/O automata~\cite{KL+03} frameworks, which provide formal notions of components, composition, and abstraction.

Interface models for components have been introduced by~\cite{AH01b}.
An interface model of a component specifies what the component expects of the environment and what it can provide. This supports compositional refinement and therefore component-based design.
 Using this principle, interface automata capture temporal assumptions about how a component is used and how it uses other components~\cite{AH01}. The authors of, e.g.,~\cite{WT05, HM06,comma,dfrefine} use interface-based design concepts for real-time aspects of systems.
Contract-based design, e.g.,~\cite{B08,cbd17}, uses similar ideas. It specifies contracts of components as a combination of assumptions that the component makes about the environment and the guarantees that it then can give.

All work mentioned above has the scope of software development or system design.
In contrast, we position our work in the scope of {system operation} with QRM as focus.
In~\cite{AP+16}, a middleware solution is presented that builds on a general-purpose component model. The middleware supports dynamic reconfiguration of components based on their quality-of-service demands. The quality-of-service parameters are fixed to period, deadline and priority, and each deployed component has a known worst-case execution time.
The view presented in~\cite{AP+16} of components as sets of instances, each with its own quality-of-service parameters, is similar to our view.
A runtime resource manager based on OpenCL is introduced in~\cite{LM+13}.
An extended compiler generates OpenCL kernels of the same code for various computational resources (CPU, GPU, FPGA, etc.) including resource-usage information. The runtime resource manager uses this to dynamically adapt the system if needed.
The authors of ~\cite{GBE10} informally introduce a component framework for quality-of-service management. They distinguish functional requirements, e.g., the required input format of a video-processing component, from resource requirements, e.g., the amount of memory needed. This is similar to our view in which we also make this distinction explicit.
Finally, \cite{compsoc2} presents a resource-management framework to create and manage multi-processor partitions at runtime, which provides a high level of (timing) predictability. The resource model is a simple instance of our model.

\paragraph{Contribution.}
Our framework uses partially-ordered sets as the interface model of components and defines composition in terms of operations from the Pareto Algebra~\cite{pareto-algebra}.
We show that the framework supports a notion of refinement called \emph{dominance}.
This natively supports domain-independent quality and resource management by a formal semantics and compositional multi-objective optimization.
Our work is on a formal, conceptual level and this distinguishes it from existing component-based approaches within the quality- and resource-management scope, e.g.,~\cite{AP+16,LM+13,GBE10}, as these are concretely tailored to some domain or lack a formal semantics. The generality of our approach enables application to many domains, but requires further specialization before it can be applied.

\paragraph{Outline.}
We use a motivating example of a video-processing system from the healthcare domain throughout this paper. This example is introduced in Sec.~\ref{sec:ex}.
Sec.~\ref{sec:pa} summarizes the main results from~\cite{pareto-algebra} on Pareto Algebra that form the foundation of our interface model for QRM.
In Sec.~\ref{sec:components} we introduce our interface model, and in Sec.~\ref{sec:composition} we define  composition of interfaces based on Pareto-algebraic operations.
This is then detailed in Sec.~\ref{sec:recipe2} and applied to the example in 
Sec.~\ref{sec:qrm}.
In Sec.~\ref{sec:qrml}, we introduce the QRML component language and toolset for the specification and analysis of quality- and resource-management problems and explain the semantics in terms of the QRM interface model developed in this paper.
Finally, Sec.~\ref{sec:conclusions} concludes.


\section{Motivating Example -- Description}\label{sec:ex}
\begin{exa}\label{ex:1}
We consider a video-processing system from the healthcare domain.
The system scales video streams in an operating theatre, for display on one or more displays. The system receives video streams with a certain resolution and rate. Examples of common resolutions are qHD (960x540 pixels), HD (1280x720 pixels), HD+ (1600x900 pixels) and FHD (1920x1080 pixels).
We assume that a pixel is encoded with four bytes.
The platform of the system consists of fiber links, to transport streams from their sources to processing resources, and hardware {scalers} that perform the processing.
Each fiber has a capacity of 10~Gb/s. A single uncompressed FHD stream at 60~Hz takes approximately 4~Gb/s, so a fiber can transport at most two such streams.
A scaler can downscale a stream to a given output resolution. Each scaler can process up to 300~Mpixels/s, and it can handle up to four streams.
Depending on the resolution of a stream, a scaler requires a certain amount of memory to buffer a single horizontal line of that stream.
Each scaler has access to only a limited amount of memory that thus must be shared by the streams that are mapped to that scaler.
This memory is split into 32 segments of 128 pixels each.
A scaler can reduce the frame rate of the outgoing stream to reduce its computational requirements. For instance, if a scaler does not have enough computational resources to process an incoming FHD stream at 60~Hz, then it can reduce the outgoing rate from 60 to 30~Hz by dropping half of the incoming video frames. This reduces the video quality at the benefit of being able to process more streams. The end user can also prioritize streams.
The task of the quality and resource manager of this system is to map streams to scalers while respecting the resource constraints, and at the same time optimizing the frame rate of the outgoing streams given the end-user's priority.
Dynamic reconfiguration takes place when streams are added or removed, or when the end user changes output resolutions or priorities.

\begin{figure}\label{fig:ex1}
\centering
\includegraphics[width=\linewidth]{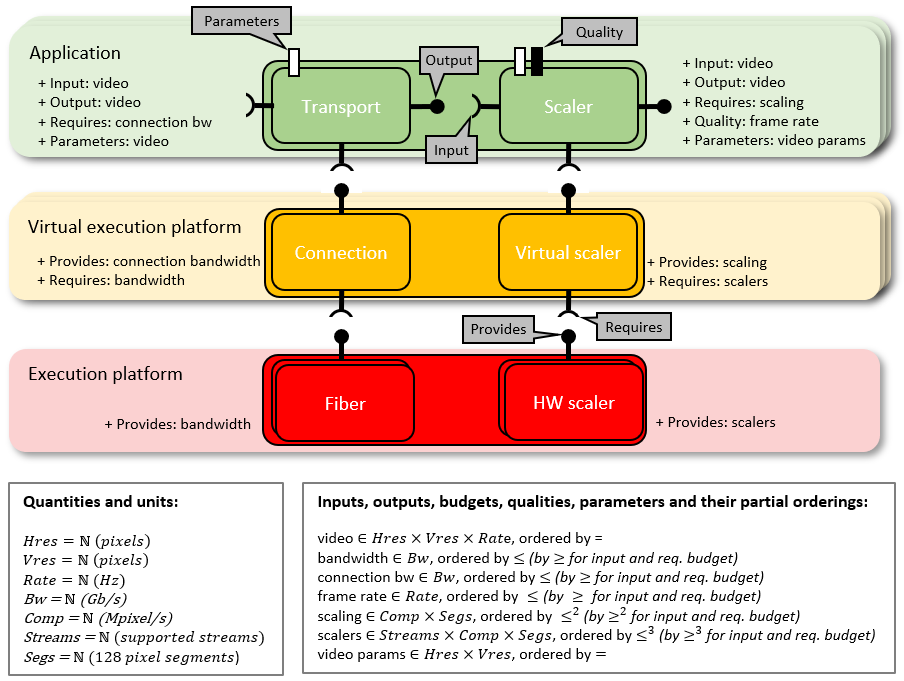}
\caption{An example video-processing system with a single stream.}\label{fig:video}
\end{figure}

Figure \ref{fig:video} shows an overview of the components in the system.
Each stream in the system is modeled as an \emph{application} that consists of a Transport component and a Scaler component.
The Transport component has video both as input and output and requires connection bandwidth that is determined by the rate and resolution of the video that is transported. This is a parameter of the Transport component that is set at runtime to match the transport configuration to the actual incoming stream.
Video is modeled as a horizontal resolution in pixels, a vertical resolution in pixels, and a rate in~Hz, each being a natural number. The connection bandwidth is the number of bits per second, again a natural number.
The Scaler component also has video as input and output, and requires a scaling budget, which consists of computational capacity (Mpixels per second) and memory requirements (a number of 128 pixel segments for line buffering).
The parameters of the Scaler specify working points for the resolution of the resulting video stream, and these can be set from the outside by, e.g., the end user  or some application using the video system.
The computational budget needed by the Scaler is determined by the input video resolution and the output video resolution and rate.
The quality aspect of the Scaler is the frame rate of the outgoing video stream.

To have a feasible composition, the output of the Transport needs to \emph{match} the input of the Scaler. We use partial-order relations to model this i/o matching, and also for the budget matching, i.e., matching a required budget with a provided budget.
For instance, the $=$ relation could be used for \emph{video}, as the Scaler can only guarantee its correct functioning if it exactly knows the input resolutions and rate.
For connection bandwidth, for instance, we can use the $\le$ relation, as the transportation of a certain video can be guaranteed also when more bandwidth is given to the Transport component than strictly needed.

A Scaler component typically has multiple \emph{configurations} that the end user or the quality and resource manager can partially select through the parameters of the component.
A Transport component also has multiple configurations, where the specific configuration is determined by the environment since the type of the incoming stream is uncontrollable.
For instance, suppose that we need to scale a 60~Hz FHD stream.
(We abbreviate video-related types; for instance, we write FHD@60 for $(1920,1080,60)$.)
Then we have a Transport component with a  configuration that requires a connection-bandwidth budget of $4$~Gb/s, with a FHD@60 input and output video.
The Scaler component has parameterized configurations. One configuration associated with output-resolution parameter HD has FHD@60 video input and HD@60 video output, requires a scaling budget of $(171,15)$, and has frame rate $60$ as quality.
Another configuration, also associated with output-resolution parameter HD, has FHD@60 video input and HD@30 video output, requires a scaling budget of $(145,15)$, and has frame rate $30$ as quality.
A third configuration with parameter HD has FHD@60 video input and HD@20 video output with scaling budget $(136,15)$, and quality $20$. The latter two configurations trade quality for a smaller required budget.

The resource needs of an application, i.e., a Transport and Scaler pair in this example, are specified via a required private \emph{virtual execution platform} (VEP) that abstracts from the underlying physical execution platform (for instance, real-time operating system and hardware). A VEP for our stream application consists of a Connection component that provides a connection bandwidth, and a Virtual Scaler component that provides scaling.
The underlying hardware forms the \emph{execution platform}, and consists of Fiber components and HW scaler components.
A single Fiber provides bandwidth in~Gb/s, which is modeled as a natural number.
Each Scaler provides a budget consisting of three parts: (i) the number of streams that can still be processed by this Scaler (limited to four), (ii) the free amount of memory (at most 32 segments), and (iii) the remaining computational power (at most 300~Mpixels/s), captured by the tuple $(4, 32, 300)$.

Again, suppose that a 60~Hz FHD stream needs to be scaled. The Transport component requires a connection bandwidth budget of $4$~Gb/s, so a Connection component is used that provides this budget.
To realize this, it requires a bandwidth budget of $4$~Gb/s from some Fiber component. The remaining $6$~Gb/s of that Fiber is available for other VEPs.
In the same way, the required and provided budget of the Virtual scaler component can be calculated from the Scaler component (for a specific configuration of the Scaler component). A Virtual scaler component can be composed with any HW scaler component with matching budget.

In general, composition may lead to configurations that are not interesting because other configurations perform better in all aspects.
Consider, for instance, two 60~Hz FHD streams that need to be scaled to HD. There are many possibilities to compose these with the hardware platform. A naive way is to compose both streams with a single HW scaler. This can be done if the output rate is set to 30~Hz.
Another way is to compose each stream with its own Fiber and HW scaler. In that case, the output rates can be 60~Hz.
The first option is \emph{dominated} by the second option if the used and remaining platform resources are not important.
Such a dominated combination can be ignored for further analysis and composition. That is, only the \emph{Pareto-optimal} component combinations need to be considered.

The quality- and resource-management problem for this system is the following. Given a number of incoming video streams (applications) and given the required output resolution of each video stream (parameters), find mappings of these streams on a platform consisting of a number of Fibers and a number of HW scalers such that the output rates (qualities) are optimized according to some user-defined cost function (e.g., maximize the minimum output rate  among all streams).
In the following sections, we introduce the mathematics that we need to solve this problem. The application of the theory is demonstrated later in the paper, with a domain-specific solution for the quality- and \allowbreak{}resource-management problem of the video-processing system explained above.
\end{exa}


\section{Preliminaries -- Pareto Algebra}\label{sec:pa}
This section briefly summarizes the results from Pareto Algebra~\cite{pareto-algebra} that we need for our interface-modeling framework. Prop.~\ref{prop:derivation-constraint} is the only new part that we have added.

A partially-ordered set (poset) is a set $Q$ with a partial-order relation $\preceq_Q$.
In the remainder, we take shortcuts for readability and conciseness in the sense that we often write about a poset $Q$ without mentioning the accompanying partial-order relation $\preceq_Q$ explicitly.
Furthermore, we sometimes omit the subscript $Q$ from $\preceq_Q$ when the ordering relation is clear from the context.
The converse relation of $\preceq$ is denoted by $\preceq^{-1}$ and the strict (i.e., non-reflexive) version of the relation is denoted by $\prec$.

\begin{defi}[Configuration Space]\label{def:configspace}
A configuration space $\mathcal{S}$ is defined as the Cartesian product  $Q_1 \times Q_2 \times \ldots \times Q_n$ of a finite number of posets. A configuration is an element of a configuration space.
\end{defi}

It is useful to have a notion whether a configuration is better than another configuration.
This kind of information can help to choose between configurations in quality- and resource-management questions.
The notion of \emph{dominance} on a configuration space serves this purpose.

\begin{defi}[Dominance]\label{def:dominance}
Let $\mathcal{S}$ be a configuration space, and let $c = (q_1, \ldots, q_n), c' = (q_1', \ldots, q_n') \in \mathcal{S}$. We say that $c$ is dominated by $c'$, or equivalently, that $c'$ dominates $c$, denoted by $c \preceq c'$, if and only if $q_i \preceq_{Q_i} q_i'$ for all $1 \le i \le n$.
\end{defi}

Note that we reverse the interpretation of $\preceq$ with respect to~\cite{pareto-algebra}.
Observe that a configuration space with the dominance relation constructed from posets is itself again a poset. Hence, configuration spaces can serve as posets in other configuration spaces.

The notion of Pareto minimality expresses that a set of configurations only contains configurations that are relevant.
This can be used to define the concepts of set dominance and Pareto equivalence of configuration sets.

\begin{defi}[Pareto minimality]\label{def:min}
Let $\mathcal{S}$ be a configuration space and let $C \subseteq \mathcal{S}$ be a set of configurations. We say that $C$ is {Pareto minimal} if and only if $c \not\prec c'$ for all $c, c' \in C$.
\end{defi}

\begin{defi}[Set dominance]\label{def:cdom}
Let $\mathcal{S}$ be a configuration space and let $C, C' \subseteq \mathcal{S}$ be sets of configurations.
We say that $C'$  {dominates} $C$, denoted by $C \preceq C'$, if and only if for every $c \in C$ there is some $c' \in C'$ such that $c \preceq c'$.
\end{defi}

\begin{defi}[Pareto equivalence]\label{def:equiv}
Let $\mathcal{S}$ be a configuration space and let $C, C' \subseteq \mathcal{S}$.
We say that $C$  {is equivalent to} $C'$, denoted by $C \equiv C'$ if and only if $C \preceq C' \land C' \preceq C$.
\end{defi}

Pareto equivalence expresses the notion that neither of the sets have a configuration that cannot be at least matched by a configuration of the other set.
The following theorem states that every well-ordered\footnote{A poset $(Q, \sqsubseteq)$ is well ordered if every chain contains a smallest element.} set of configurations has a unique minimal equivalent, which is often called the Pareto frontier in other contexts. As a corollary of this theorem, every finite set of configurations has such a unique minimal equivalent.

\begin{thm}
If $C$ is a set of configurations, and $(C, \preceq^{-1})$ is well ordered, then there is a unique Pareto-minimal set of configurations $D$ such that $D \equiv C$.
\end{thm}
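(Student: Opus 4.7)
The plan is to take $D$ to be the set of $\preceq$-maximal elements of $C$, verify this set is Pareto-minimal and equivalent to $C$, and then argue that any other Pareto-minimal equivalent of $C$ must coincide with $D$.

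For Pareto minimality of $D$ in the sense of Def.~\ref{def:min}, I would argue directly: if $d, d' \in D$ satisfied $d \prec d'$, then $d$ would fail to be maximal in $C$, contradicting its membership in $D$.

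For the equivalence $D \equiv C$ (Def.~\ref{def:equiv}), the direction $D \preceq C$ is immediate because $D \subseteq C$ and each $d \in D$ satisfies $d \preceq d$. The direction $C \preceq D$ is where the well-ordering hypothesis enters, and I expect this to be the only nontrivial step. For each $c \in C$ I would consider the up-set $U_c = \{c' \in C : c \preceq c'\}$, which is non-empty (it contains $c$) and inherits from $C$ the property that every chain has a greatest element; indeed, the maximum of a chain in $U_c$, guaranteed by the well-ordering hypothesis on $(C, \preceq^{-1})$, itself lies in $U_c$ because all its elements dominate $c$. Zorn's lemma then yields a $\preceq$-maximal element $d \in U_c$. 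A short argument shows that $d$ is in fact maximal in all of $C$: any $c'' \in C$ with $d \prec c''$ would also belong to $U_c$ by transitivity, contradicting maximality of $d$ within $U_c$. Hence $d \in D$ and $c \preceq d$, establishing $C \preceq D$.

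For uniqueness, suppose $D'$ is another Pareto-minimal set of configurations with $D' \equiv C$. Then $D \equiv D'$ by transitivity of set dominance. Given $d \in D$, I would chase an element through both directions of the equivalence: $D \preceq D'$ yields some $d' \in D'$ with $d \preceq d'$, and $D' \preceq D$ yields some $d'' \in D$ with $d' \preceq d''$. Transitivity gives $d \preceq d''$; Pareto minimality of $D$ rules out $d \prec d''$, so $d = d''$, and antisymmetry of $\preceq$ then forces $d = d'$, whence $d \in D'$. A symmetric argument gives $D' \subseteq D$, hence $D = D'$. The main obstacle is the $C \preceq D$ direction, which is the only place where the well-ordering assumption is essential; everything else is bookkeeping with the definitions of Pareto minimality, set dominance, and equivalence.
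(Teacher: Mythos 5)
The paper states this theorem without proof, importing it from the cited Pareto Algebra reference, so there is no in-paper argument to compare against. Your proof is correct and follows the standard route for this result: taking $D$ to be the $\preceq$-maximal elements of $C$, using the chain condition (via Zorn's lemma) to show every configuration is dominated by a maximal one, and deriving uniqueness from Pareto minimality plus antisymmetry. Each step checks out against Defs.~\ref{def:min}--\ref{def:equiv}, including the only delicate point, namely that a chain's maximum lies in the up-set $U_c$, so nothing further is needed.
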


\begin{cor}
If $C$ is a finite set of configurations, then there is a unique Pareto-minimal set of configurations $D$ such that $D \equiv C$.
\end{cor}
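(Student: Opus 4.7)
The plan is to derive the corollary directly from the preceding theorem by showing that any finite set of configurations, when equipped with the converse dominance relation, is well-ordered. Once that reduction is in place, the existence and uniqueness of the Pareto-minimal equivalent $D$ are immediate consequences of the theorem applied to $(C, \preceq^{-1})$.

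The key step is therefore to verify the well-ordering condition: every chain in $(C, \preceq^{-1})$ contains a smallest element. First I would observe that any chain $C' \subseteq C$ is itself a subset of the finite set $C$ and hence finite. Second, any non-empty finite totally-ordered set has a minimum, which can be seen either by induction on the cardinality or simply by noting that one may scan through the elements of $C'$ and retain the $\preceq^{-1}$-smallest one encountered. This minimum with respect to $\preceq^{-1}$ is exactly a $\preceq$-maximum, so the property we need holds. The empty chain is either excluded from the definition or trivially satisfies a vacuous version of the condition, so it poses no issue.

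Once well-ordering is established, I would invoke the preceding theorem on $(C, \preceq^{-1})$ to conclude that there exists a unique Pareto-minimal $D \subseteq \mathcal{S}$ with $D \equiv C$, which is precisely the statement of the corollary.

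I do not expect any real obstacle here: the only thing to watch for is a clean handling of the definition of well-ordering (ensuring we interpret "every chain has a smallest element" under the reversed order $\preceq^{-1}$, which corresponds to a largest element under $\preceq$), and making sure the reader sees that finiteness of $C$ transfers to finiteness of arbitrary chains in $C$. Both points are routine, so the corollary essentially amounts to a one-line application of the theorem.
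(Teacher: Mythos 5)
Your proposal is correct and matches the argument the paper intends: the corollary is stated as an immediate consequence of the preceding theorem, justified exactly by the observation that every chain in a finite set is finite and hence has a smallest element under $\preceq^{-1}$, so $(C,\preceq^{-1})$ is well ordered. Your handling of the reversed order and the vacuous empty-chain case is the same routine reduction.
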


\begin{defi}[Minimization]
Let $\mathcal{S}$ be a configuration space, and let $C \in \mathcal{S}$ such that $(C, \preceq^{-1})$ is well ordered. We let $\mathit{min}(C)$ denote the unique Pareto-minimal and Pareto-equivalent set of configurations.
\end{defi}

Minimization of a set of configurations can be done with, for instance, the Simple Cull algorithm (see, e.g.,~\cite{pareto-calc}) with a worst-case complexity of $\mathcal{O}(n^2)$ where $n$ is the number of configurations.
Below we define some operations on configuration spaces. 

Whether an operator on configuration spaces can safely use minimized operands is expressed through the notion of dominance preservation.

\begin{defi}[Dominance preservation]\label{def:sound}
Let $\mathcal{S}, \mathcal{T}$ be configuration spaces, let $f : \mathcal{S} \to \mathcal{T}$ be an operator with $n$ operands, and  let $C_1,\dots,C_n, C'_1,\dots,C'_n \subseteq \mathcal{S}$. We say that $f$ \emph{preserves dominance} if and only if
$\left( \forall_{1 \le i \le n}\,\, C_i \preceq C'_i \right) \Rightarrow f(C_1, \ldots, C_n) \preceq f(C'_1, \ldots, C'_n)$.
\end{defi}

\begin{cor}\label{cor:sound}
If $f$ preserves dominance, then $f$ can safely use Pareto-minimal operands, i.e.,
$f(C_1, \ldots, C_n) \equiv f(\mathit{min}(C_1), \ldots, \mathit{min}(C_n))$.
\end{cor}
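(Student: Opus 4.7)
The plan is to unfold both sides of the claimed equivalence via Definition~\ref{def:equiv}: we must establish two set-dominance statements, namely $f(C_1,\ldots,C_n) \preceq f(\mathit{min}(C_1),\ldots,\mathit{min}(C_n))$ and the converse. I would obtain each of these by a single application of the dominance-preservation hypothesis, fed with the appropriate pair of equivalent operands.

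First I would record the basic observation that $\mathit{min}(C_i) \equiv C_i$ for every $1 \le i \le n$, which is built into Definition~\ref{def:equiv} applied to the output of the minimization operator: by construction $\mathit{min}(C_i)$ is Pareto-equivalent to $C_i$, hence in particular $C_i \preceq \mathit{min}(C_i)$ and $\mathit{min}(C_i) \preceq C_i$. (Implicitly I need each $C_i$ to satisfy the well-orderedness premise so that $\mathit{min}(C_i)$ is defined; I would simply assume this as part of the corollary's context, as it is required for the right-hand side of the equation to make sense.)

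Next I would instantiate Definition~\ref{def:sound} with the two tuples of operands $(C_1,\ldots,C_n)$ and $(\mathit{min}(C_1),\ldots,\mathit{min}(C_n))$. Applying the hypothesis once with $C_i$ in the role of the dominated operand and $\mathit{min}(C_i)$ in the role of the dominating operand yields $f(C_1,\ldots,C_n) \preceq f(\mathit{min}(C_1),\ldots,\mathit{min}(C_n))$. Swapping the roles in a second application gives $f(\mathit{min}(C_1),\ldots,\mathit{min}(C_n)) \preceq f(C_1,\ldots,C_n)$. Combining the two by Definition~\ref{def:equiv} delivers the desired Pareto equivalence.

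There is essentially no obstacle here: the argument is a direct chase through the definitions, and the only subtle point is the reliance on the symmetry of Pareto equivalence together with the fact that dominance preservation can be applied in both directions simultaneously (coordinatewise), which is immediate from its statement. If one wanted to be more explicit about $\mathit{min}(C_i) \equiv C_i$, one could note that Pareto minimality of $\mathit{min}(C_i)$ together with $\mathit{min}(C_i) \subseteq C_i$ trivially yields $\mathit{min}(C_i) \preceq C_i$, and the reverse dominance follows because each element of $C_i$ is dominated by some element of its minimal equivalent---which is precisely the property characterizing $\mathit{min}(C_i)$.
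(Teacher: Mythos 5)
Your proof is correct. The paper states this corollary without proof (it is imported from the cited Pareto Algebra work), and your argument---two applications of Definition~\ref{def:sound} fed with the two directions of $C_i \equiv \mathit{min}(C_i)$, combined via Definition~\ref{def:equiv}---is exactly the direct definitional argument the paper implicitly relies on, including the sensible caveat that each $C_i$ must satisfy the well-orderedness premise for $\mathit{min}(C_i)$ to be defined.
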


Whether the result of an operator is already minimal is expressed by a notion of completeness that is formalized below.

\begin{defi}[Minimality preservation]\label{def:complete}
Let $\mathcal{S}, \mathcal{T}$ be configuration spaces, let $f : \mathcal{S} \to \mathcal{T}$ be an operator with $n$ operands, and  let $C_1,\dots,C_n \subseteq \mathcal{S}$. We say that $f$ \emph{preserves minimality} if and only if $\mathit{min}(f(C_1, \ldots, C_n)) = f(\mathit{min}(C_1), \ldots, \mathit{min}(C_n))$.
\end{defi}

Pareto-minimal sets of configurations are sufficient and necessary for the optimization of arbitrary cost functions which behave monotonically.

\begin{defi}[Derivation\footnote{This is a small and straightforward extension of the concept of \emph{monotonicity} from~\cite{pareto-algebra}.}]\label{def:derivations}
Let $\mathcal{S}$ be a configuration space, let $Q$ be a poset and let $f: \mathcal{S} \to Q$.
We say that $f$ is an $\preceq$-{derivation} (increasing derivation) if and only if $c \preceq c'$ implies that $f(c) \preceq_Q f(c')$ for all $c,c' \in \mathcal{S}$.
Similarly, we say that $f$ is a $\succeq$-{derivation} (decreasing derivation) if and only if $c \preceq c'$ implies that $f(c) \succeq_Q f(c')$ for all $c,c' \in \mathcal{S}$.
\end{defi}

\begin{defi}[Cost function]
Let $\mathcal{S}$ be a configuration space, let $C \subseteq \mathcal{S}$, let $Q$ be a poset and let $f: \mathcal{S} \to Q$. Then we say that $f$ is a cost function if and only if it is an $\preceq$-{derivation}.
We say that $c \in C$ is an optimum in $C$ if and only if $f(c') \preceq f(c)$ for all $c' \in C$, and $c$ is called optimal if and only if $f(c) \not\prec f(c')$ for all $c' \in C$.
\end{defi}

The following two theorems, taken from~\cite{pareto-algebra}, express (i) that dominated configurations are not needed to derive optimal configurations, and (ii) that every Pareto-minimal configuration in a configuration set is the unique optimum under at least some cost function.

\begin{thm}[Sufficiency of Pareto-minimal sets]\label{th:suf}
Let $\mathcal{S}$ be a configuration space, let $c, c' \in \mathcal{S}$, and let $f$ be a cost function. If $c' \preceq c$, then $f(c') \preceq f(c)$.
\end{thm}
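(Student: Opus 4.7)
The plan is to observe that this theorem is essentially a direct restatement of the definition of a cost function. By the preceding definition, a cost function $f : \mathcal{S} \to Q$ is by definition an $\preceq$-derivation, meaning that for all $a, b \in \mathcal{S}$, $a \preceq b$ implies $f(a) \preceq_Q f(b)$.

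First I would instantiate this defining property with $a = c'$ and $b = c$. The hypothesis $c' \preceq c$ of the theorem then yields immediately $f(c') \preceq_Q f(c)$, which is the desired conclusion.

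Thus there is really no obstacle at all: the only "work" is unfolding the definition of cost function and applying it to the given pair. If one wants to make the statement feel less tautological, I would add a brief interpretive remark explaining what it means operationally for Pareto-minimal sets: whenever a configuration $c$ is dominated by another configuration $c'$ in the set, then under any cost function $f$ the cost $f(c')$ dominates $f(c)$ in $Q$, so $c'$ is at least as good as $c$ with respect to $f$. Consequently, restricting attention to the Pareto-minimal subset $\mathit{min}(C)$ cannot eliminate any optimum of $f$ over $C$, which is precisely the "sufficiency" content of the theorem's name. No case analysis, induction, or auxiliary lemma is needed; the entire proof fits in one line invoking Definition~\ref{def:derivations} via the Definition of a cost function.
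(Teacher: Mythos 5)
Your proof is correct: with the paper's definitions, a cost function is by definition a $\preceq$-derivation, and instantiating that defining property at the pair $(c', c)$ immediately yields $f(c') \preceq f(c)$. The paper itself states this theorem without proof (importing it from the Pareto Algebra reference), and your one-line unfolding of the definition is exactly the argument that is needed; no further work is required.
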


\begin{thm}[Necessity of Pareto-minimal sets]\label{th:nec}
Let $\mathcal{S}$ be a configuration space, let $C \subseteq \mathcal{S}$ such that $C$ is Pareto minimal, and let $c \in C$.
Then there exists a cost function $f$ such that $c$ is the unique optimum configuration.
\end{thm}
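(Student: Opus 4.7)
The plan is to construct a very simple cost function that separates $c$ from every other element of $C$. Take $Q = \{0,1\}$ with the usual order $0 \prec_Q 1$, and define $f : \mathcal{S} \to Q$ by
\[
  f(x) = \begin{cases} 1 & \text{if } c \preceq x, \\ 0 & \text{otherwise.} \end{cases}
\]

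First I would verify that $f$ is a cost function, i.e., a $\preceq$-derivation. Suppose $x \preceq y$. If $f(x) = 0$ there is nothing to check, since $0 \preceq_Q f(y)$ trivially. If $f(x) = 1$ then $c \preceq x$, and transitivity of $\preceq$ gives $c \preceq y$, so $f(y) = 1 = f(x)$. Hence $f$ is monotone and so qualifies as a cost function in the sense of the definition above.

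Next I would check that $c$ is the unique optimum of $f$ on $C$. Clearly $f(c) = 1$. For any $c' \in C$ with $c' \neq c$, Pareto minimality of $C$ forbids $c \prec c'$; together with $c \neq c'$ this rules out $c \preceq c'$, hence $f(c') = 0$. Thus $f(c') \preceq f(c)$ for every $c' \in C$, so $c$ is an optimum. For uniqueness, if some $c' \in C$ were also an optimum, then in particular $f(c) \preceq f(c')$, forcing $f(c') = 1$ and hence $c' = c$ by the previous argument.

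The only real pitfall is choosing $f$ correctly: a naive indicator $f(x) = 1$ iff $x = c$ is not monotone in general, because $c$ may be strictly dominated somewhere in $\mathcal{S}\setminus C$. Taking the upward closure $\{x \in \mathcal{S} : c \preceq x\}$ as the indicator set keeps $f$ monotone, and Pareto minimality of $C$ ensures that this upward closure still meets $C$ in exactly $\{c\}$, which is precisely what makes $c$ the unique optimum.
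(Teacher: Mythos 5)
Your proof is correct. The paper itself does not prove Theorem~\ref{th:nec}; it imports it from the cited Pareto Algebra reference, where the standard argument is exactly the one you give: take the monotone indicator function of the up-set $\{x \in \mathcal{S} : c \preceq x\}$ into the two-element chain, and use Pareto minimality of $C$ to conclude that this up-set meets $C$ only in $c$. Your closing remark about why the naive indicator of $\{c\}$ fails is a good observation and correctly identifies the one subtlety in the construction.
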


Next, four operations on configuration spaces are defined that all preserve dominance and all but one preserve minimality.
We use these operations later on to define composition of components in our framework.

\begin{defi}[Free product]\label{def:fp}
Let $\mathcal{S}_1$ and $\mathcal{S}_2$ be configuration spaces, let $C_1 \subseteq \mathcal{S}_1$ and let $C_2 \subseteq \mathcal{S}_2$.
The free product of $C_1$ and $C_2$ is the Cartesian product $C_1 \times C_2$ in the configuration space $\mathcal{S}_1 \times \mathcal{S}_2$.
\end{defi}
\begin{prop}\label{prop:fcp-dom}
The free product preserves dominance.
\end{prop}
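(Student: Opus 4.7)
The plan is to verify Definition \ref{def:sound} directly for the two-argument operator $f(C_1, C_2) \definitie C_1 \times C_2$, by unfolding the definition of set dominance (Definition \ref{def:cdom}) and using the componentwise definition of dominance on the product configuration space (Definition \ref{def:dominance}). No inductive argument or algorithm is needed — the whole proof amounts to chaining together three definitions.

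Concretely, I assume $C_1 \preceq C_1'$ and $C_2 \preceq C_2'$, and I must show $C_1 \times C_2 \preceq C_1' \times C_2'$ in the configuration space $\mathcal{S}_1 \times \mathcal{S}_2$. By Definition \ref{def:cdom}, this means taking an arbitrary element of $C_1 \times C_2$ — necessarily of the form $(c_1, c_2)$ with $c_1 \in C_1$ and $c_2 \in C_2$ — and producing a dominating element in $C_1' \times C_2'$. From the first hypothesis, I obtain $c_1' \in C_1'$ with $c_1 \preceq_{\mathcal{S}_1} c_1'$, and from the second, $c_2' \in C_2'$ with $c_2 \preceq_{\mathcal{S}_2} c_2'$. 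Then $(c_1', c_2') \in C_1' \times C_2'$ is the required witness, because the dominance relation on $\mathcal{S}_1 \times \mathcal{S}_2$ is by Definition \ref{def:dominance} exactly the componentwise conjunction of the dominance relations on $\mathcal{S}_1$ and $\mathcal{S}_2$, so $(c_1, c_2) \preceq (c_1', c_2')$ follows immediately.

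There is essentially no obstacle — the only point requiring a bit of care is to avoid conflating the two senses of $\preceq$ at play (set dominance on the left/right of the desired inequality, and elementwise dominance inside the product configuration space), and to note that the hypothesis $C_i \preceq C_i'$ is a purely existential statement that supplies the witnesses $c_1'$ and $c_2'$ independently, which is exactly what is needed since a pair in a Cartesian product is built from independently chosen components. I would present the argument in one short paragraph, explicitly naming the definition invoked at each step, and no auxiliary lemma is required.
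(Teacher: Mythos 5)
Your proof is correct: the paper itself states Prop.~\ref{prop:fcp-dom} without proof, importing it from the cited Pareto Algebra reference, and your direct unfolding of Definitions~\ref{def:cdom} and~\ref{def:dominance} is exactly the standard argument one would give. The one subtlety you implicitly rely on --- that dominance on $\mathcal{S}_1 \times \mathcal{S}_2$ viewed componentwise over all poset factors coincides with ``$c_1 \preceq_{\mathcal{S}_1} c_1'$ and $c_2 \preceq_{\mathcal{S}_2} c_2'$'' --- is justified by the paper's remark that a configuration space with its dominance relation is itself a poset, so no gap remains.
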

\begin{prop}\label{prop:fcp-min}
The free product preserves minimality.
\end{prop}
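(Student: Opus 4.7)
The plan is to invoke the uniqueness half of the Theorem following Def.~\ref{def:equiv}: to prove $\mathit{min}(C_1 \times C_2) = \mathit{min}(C_1) \times \mathit{min}(C_2)$, it suffices to verify that $\mathit{min}(C_1) \times \mathit{min}(C_2)$ is (a) Pareto minimal and (b) Pareto equivalent to $C_1 \times C_2$, since then uniqueness forces it to coincide with $\mathit{min}(C_1 \times C_2)$. This is exactly the equation required by Def.~\ref{def:complete}.

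Property (b) comes for free: Prop.~\ref{prop:fcp-dom} says the free product preserves dominance, and Cor.~\ref{cor:sound} then yields $\mathit{min}(C_1) \times \mathit{min}(C_2) \equiv C_1 \times C_2$. So the real content of the proof is (a), the minimality check. Here I would take two pairs $(c_1, c_2), (c_1', c_2') \in \mathit{min}(C_1) \times \mathit{min}(C_2)$ with $(c_1, c_2) \preceq (c_1', c_2')$ in $\mathcal{S}_1 \times \mathcal{S}_2$. Unpacking the product order via Def.~\ref{def:dominance} gives $c_1 \preceq_{\mathcal{S}_1} c_1'$ and $c_2 \preceq_{\mathcal{S}_2} c_2'$. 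Because each $\mathit{min}(C_i)$ is Pareto minimal, neither coordinate inequality can be strict, so $c_i = c_i'$ for $i = 1, 2$, hence $(c_1, c_2) = (c_1', c_2')$; in particular no strict dominance can occur inside the product set.

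The one point that requires care, and in my view is the only real obstacle, is the bookkeeping around the strict order: on a product poset, $(a_1, a_2) \prec (b_1, b_2)$ is equivalent to $a_1 \preceq b_1$ and $a_2 \preceq b_2$ together with \emph{overall} inequality, not to strict dominance in each coordinate. Once this is properly handled, ruling out $c_1 \prec c_1'$ (resp.\ $c_2 \prec c_2'$) in $\mathit{min}(C_1)$ (resp.\ $\mathit{min}(C_2)$) is enough to collapse both coordinates to equality, which is the crux of the componentwise argument above. I would also briefly note that well-orderedness of $(C_1 \times C_2, \preceq^{-1})$, needed to apply the uniqueness theorem, follows from well-orderedness of the factors in the cases of interest (in particular the finite case covered by the Corollary after that theorem).
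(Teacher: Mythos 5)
Your proof is correct. Note that the paper itself gives no argument for this proposition---it is one of the results imported verbatim from the Pareto Algebra reference---so there is no in-paper proof to compare against; your write-up is a valid self-contained justification. The two key steps both hold up: equivalence of $\mathit{min}(C_1)\times\mathit{min}(C_2)$ with $C_1\times C_2$ follows from Prop.~\ref{prop:fcp-dom} via Cor.~\ref{cor:sound}, and your minimality check correctly exploits that $c_i\preceq c_i'$ inside a Pareto-minimal set forces $c_i=c_i'$ (since $\prec$ is just $\preceq$ plus inequality, not coordinatewise strictness). Your closing caveat about well-orderedness of the product is the right thing to flag; for completeness one can verify that a chain in $C_1\times C_2$ projects to chains in the factors whose extremal elements must coincide in a single element of the original chain by comparability, so the product of well-ordered factors is again well ordered, and in the finite case this is immediate.
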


\begin{defi}[Safe constraint]\label{def:constraint}
Let  $\mathcal{S}$ be a configuration space. 
A constraint is a set $D \subseteq \mathcal{S}$.
A \emph{safe} constraint is a set $D \subseteq \mathcal{S}$ such that for all $c_1, c_2 \in \mathcal{S}$ holds that if $c_2 \preceq c_1$ and $c_2 \in D$, then $c_1 \in D$ (implying that dominating configurations cannot be excluded from a safe constraint).
Application of a constraint $D$ to a set of configurations $C \subseteq \mathcal{S}$ is defined as $C \cap D$.
\end{defi}
\begin{prop}\label{prop:con-dom}
Application of a safe constraint preserves dominance.
\end{prop}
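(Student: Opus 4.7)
The plan is to unfold the definitions and chase the elements. Fix the configuration space $\mathcal{S}$ and a safe constraint $D \subseteq \mathcal{S}$. Constraint application is the unary operator $f(C) = C \cap D$, so by Def.~\ref{def:sound} I need to show that $C \preceq C'$ implies $C \cap D \preceq C' \cap D$, i.e., that every element of $C \cap D$ is dominated by some element of $C' \cap D$.

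First I would take an arbitrary $c \in C \cap D$. Since $c \in C$ and $C \preceq C'$, Def.~\ref{def:cdom} yields some $c' \in C'$ with $c \preceq c'$. It remains to show that $c'$ also lies in $D$, so that $c' \in C' \cap D$ witnesses the required dominance. This is exactly where the safety hypothesis on $D$ is used: since $c \in D$ and $c \preceq c'$, the upward-closure condition in Def.~\ref{def:constraint} (instantiated with $c_2 = c$, $c_1 = c'$) gives $c' \in D$. Therefore $c' \in C' \cap D$, and we conclude $C \cap D \preceq C' \cap D$.

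There is no real obstacle here; the only subtle point is recognizing that the definition of safe constraint is precisely the upward-closure property needed to transport a dominating witness from $C'$ into $C' \cap D$. Without safety, a dominating $c' \in C'$ could fail to satisfy the constraint, breaking the argument, which is what motivates the definition in the first place. Hence the proposition follows in a few lines.
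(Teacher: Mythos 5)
Your proof is correct and is the standard argument: the paper itself does not prove Prop.~\ref{prop:con-dom} (it is imported from~\cite{pareto-algebra}), but your element chase --- pick $c \in C \cap D$, obtain a dominating $c' \in C'$ from $C \preceq C'$, and use the upward-closure condition of Def.~\ref{def:constraint} with $c_2 = c$, $c_1 = c'$ to place $c'$ in $D$ --- is exactly the intended justification. Nothing is missing.
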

\begin{prop}
Application of a safe constraint preserves minimality.
\end{prop}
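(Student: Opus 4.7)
The plan is to unfold both sides of the equation $\mathit{min}(C \cap D) = \mathit{min}(C) \cap D$ using the characterization that for any set $X \subseteq \mathcal{S}$ (with $(X, \preceq^{-1})$ well ordered), $c \in \mathit{min}(X)$ iff $c \in X$ and no element of $X$ strictly dominates $c$. With this characterization in hand, the statement splits cleanly into two set-inclusion directions, and only one of them uses the safety hypothesis on $D$.

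First I would handle the inclusion $\mathit{min}(C) \cap D \subseteq \mathit{min}(C \cap D)$. This direction is essentially free: if $c \in \mathit{min}(C) \cap D$, then $c \in C \cap D$, and since every $c' \in C \cap D$ is in particular in $C$, Pareto-minimality of $c$ in $C$ immediately gives $c \not\prec c'$. No assumption on $D$ beyond $D \subseteq \mathcal{S}$ is needed here.

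The other direction, $\mathit{min}(C \cap D) \subseteq \mathit{min}(C) \cap D$, is where the safety of $D$ carries the argument. Take $c \in \mathit{min}(C \cap D)$; then $c \in D$ is automatic, so the only thing to verify is $c \in \mathit{min}(C)$. I would argue by contradiction: if some $c' \in C$ satisfies $c \prec c'$, then in particular $c \preceq c'$ and $c \in D$, so by Definition~\ref{def:constraint} (safety) $c' \in D$, hence $c' \in C \cap D$, contradicting the assumed Pareto-minimality of $c$ in $C \cap D$. This is the conceptual core: safety exactly rules out the pathological case where minimality could be lost by intersecting with $D$, namely a dominator escaping the constraint while the dominated element stays inside.

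The main obstacle, such as it is, is just making sure to invoke the safety property in the correct direction. The definition is stated as ``$c_2 \preceq c_1$ and $c_2 \in D$ imply $c_1 \in D$,'' so one must be careful to apply it with $c$ (the element known to lie in $D$) in the lower slot and the hypothetical dominator $c'$ in the upper slot; mis-orienting the relation would give nothing. Once that bookkeeping is right, the proof is a short two-paragraph argument requiring no further machinery.
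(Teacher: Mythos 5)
Your proof is correct. The paper itself gives no proof of this proposition (it is imported from the Pareto Algebra reference), and your argument is the standard one: you unfold $\mathit{min}$ as the set of non-dominated elements, observe that the inclusion $\mathit{min}(C)\cap D\subseteq\mathit{min}(C\cap D)$ is immediate, and use safety of $D$ (correctly oriented, with $c$ in the lower slot of Def.~\ref{def:constraint}) exactly where it is needed, namely to prevent a dominator of $c$ from escaping $D$ in the reverse inclusion. The only point worth making explicit is that the well-orderedness of $(C,\preceq^{-1})$ is inherited by $(C\cap D,\preceq^{-1})$, so that $\mathit{min}(C\cap D)$ is defined and your characterization of $\mathit{min}$ applies to both sides.
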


\begin{defi}[Function application]\label{def:derivation}
Let $\mathcal{S}$ be a configuration space, let $C \subseteq \mathcal{S}$, let $Q$ be a poset and let $f: \mathcal{S} \to Q$.
Application of $f$ to $\mathcal{S}$ yields a new configuration space $\mathcal{S} \times Q$ and is defined as
$f (q_1, \ldots, q_n) = (q_1, \ldots, q_n, f(q_1, \ldots, q_n))$. We define $f(C) = \{ f(c) \,|\, c \in C \}$.
\end{defi}
Note that this definition of function application overloads the function name. Function name $f$ is also used to denote the application of $f$ to some configuration space.
\begin{prop}\label{prop:der-dom}
Application of an $\preceq$-{derivation} preserves dominance.
\end{prop}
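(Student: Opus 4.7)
The plan is to chase definitions: unfold set dominance, the function-application construction, and the coordinatewise dominance on $\mathcal{S} \times Q$, and then invoke the $\preceq$-derivation hypothesis in the last coordinate.

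First I would assume $C \preceq C'$ in $\mathcal{S}$ and set out to prove $f(C) \preceq f(C')$ in the product configuration space $\mathcal{S} \times Q$. Pick an arbitrary $d \in f(C)$. By Def.~\ref{def:derivation}, $d = (c, f(c))$ for some $c \in C$. The hypothesis $C \preceq C'$ together with Def.~\ref{def:cdom} yields some $c' \in C'$ with $c \preceq c'$ in $\mathcal{S}$. Applying the $\preceq$-derivation property (Def.~\ref{def:derivations}) to this gives $f(c) \preceq_Q f(c')$.

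Now I would combine the two: since dominance on a product of posets is coordinatewise (Def.~\ref{def:dominance}) and we have both $c \preceq c'$ in $\mathcal{S}$ and $f(c) \preceq_Q f(c')$ in $Q$, it follows that $(c, f(c)) \preceq (c', f(c'))$ in $\mathcal{S} \times Q$. The right-hand side equals $f(c') \in f(C')$, so every element of $f(C)$ is dominated by some element of $f(C')$, which is exactly the required set dominance $f(C) \preceq f(C')$.

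I do not expect any real obstacle here: the argument is a one-step definition unrolling whose only non-bookkeeping ingredient is the derivation property, which is precisely the hypothesis of the proposition. The only thing to be slightly careful about is not confusing $f$ as a map on configurations with its overloaded use on sets, but Def.~\ref{def:derivation} makes this unambiguous.
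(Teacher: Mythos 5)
Your proof is correct: it is exactly the definition-unfolding argument one expects, using set dominance to obtain a matching $c' \in C'$, the $\preceq$-derivation property for the appended coordinate, and coordinatewise dominance on $\mathcal{S} \times Q$. The paper itself gives no proof of this proposition (it is imported from the cited Pareto Algebra work), but your argument is the standard one and is consistent with all the paper's definitions.
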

\begin{prop}
Application of an $\preceq$-{derivation} preserves minimality.
\end{prop}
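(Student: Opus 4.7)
The plan is to combine the uniqueness of Pareto-minimal equivalents with dominance preservation, so that the bulk of the argument reduces to verifying that $f(\mathit{min}(C))$ is itself Pareto-minimal. More precisely, I would first invoke Prop.~\ref{prop:der-dom} together with Cor.~\ref{cor:sound} to obtain $f(C) \equiv f(\mathit{min}(C))$. Then, if I can show that $f(\mathit{min}(C))$ is Pareto-minimal, the uniqueness of Pareto-minimal equivalents (the theorem preceding the minimization definition, applied in its finite-set form but equally for the well-ordered case inherited from $C$) forces $f(\mathit{min}(C)) = \mathit{min}(f(C))$, which is the desired equality.

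So the only real step is the minimality of $f(\mathit{min}(C))$. I would unfold Def.~\ref{def:derivation}: every element of $f(\mathit{min}(C))$ has the form $(c, f(c))$ with $c \in \mathit{min}(C)$, living in $\mathcal{S} \times Q$. Suppose for contradiction that $(c, f(c)) \prec (c', f(c'))$ for some $c, c' \in \mathit{min}(C)$. By the product ordering (Def.~\ref{def:dominance}), this means $c \preceq_{\mathcal{S}} c'$ and $f(c) \preceq_Q f(c')$, with at least one inequality strict. The case $c \prec c'$ is immediately ruled out because $\mathit{min}(C)$ is Pareto-minimal (Def.~\ref{def:min}). The remaining case forces $c = c'$, but then $f(c) = f(c')$, contradicting the strictness of $(c, f(c)) \prec (c', f(c'))$. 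Hence $f(\mathit{min}(C))$ is Pareto-minimal.

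Putting the two pieces together gives the equality $\mathit{min}(f(C)) = f(\mathit{min}(C))$ required by Def.~\ref{def:complete}. The main obstacle I anticipate is purely notational: the overloading in Def.~\ref{def:derivation} means one must carefully distinguish $f$ as a map $\mathcal{S} \to Q$ from $f$ as a map $\mathcal{S} \to \mathcal{S} \times Q$ that appends the image, and only in the latter, product-space form does the strict-dominance argument above go through cleanly. Once this is untangled, the proof is short and essentially mechanical, and mirrors the structure of the companion statement for safe constraints.
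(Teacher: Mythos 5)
Your proposal is correct. Note that the paper itself gives no proof of this proposition: it is one of the results imported from the Pareto Algebra reference, so there is nothing to compare against except the standard argument, which is exactly what you give. The decomposition is the right one: Prop.~\ref{prop:der-dom} with Cor.~\ref{cor:sound} yields $f(C) \equiv f(\mathit{min}(C))$; the case split on whether $c = c'$ correctly shows $f(\mathit{min}(C))$ is an antichain in $\mathcal{S} \times Q$ (the key point being that $f$ is a function, so equal first components force equal pairs); and uniqueness of the Pareto-minimal equivalent then gives $f(\mathit{min}(C)) = \mathit{min}(f(C))$ as required by Def.~\ref{def:complete}. The only thing you gloss over is that invoking the uniqueness theorem for $\mathit{min}(f(C))$ requires $(f(C), \preceq^{-1})$ to be well ordered; this is indeed inherited from $C$ (a chain in $f(C)$ projects injectively to a chain in $C$, and the least element lifts), but since the paper states all these propositions without explicit well-orderedness hypotheses, your level of care here matches the source.
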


An application of a function can be written as an application of a constraint on the free product with the function range.
This is formalized by the following proposition.

\begin{prop}\label{prop:derivation-constraint}
Let $\mathcal{S}$ be a configuration space, let $Q$ be a poset, and let $f : \mathcal{S} \to Q$.
Then some constraint $D$ exists such that $f(C) = D \cap (C \times Q)$ (where $f(C)$ is function application as in Def.~\ref{def:derivation})  for all sets of configurations $C \subseteq \mathcal{S}$.
\end{prop}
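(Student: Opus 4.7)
The plan is to produce $D$ explicitly as the graph of $f$, regarded as a subset of the configuration space $\mathcal{S} \times Q$. That is, I would set
\[
D \;=\; \{\,(s, f(s)) \mid s \in \mathcal{S}\,\} \;\subseteq\; \mathcal{S} \times Q.
\]
Note that Def.~\ref{def:constraint} calls any subset of the ambient configuration space a constraint (safety is a separate, stronger property), so $D$ is immediately a legitimate candidate and there is no side condition to check at this step.

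Next I would verify the claimed equality $f(C) = D \cap (C \times Q)$ for an arbitrary $C \subseteq \mathcal{S}$ by proving the two inclusions. For the forward direction, pick any element of $f(C)$; by Def.~\ref{def:derivation} it has the form $(c, f(c))$ for some $c \in C$. Since $c \in \mathcal{S}$, the pair $(c, f(c))$ lies in $D$, and since $c \in C$ and $f(c) \in Q$, it also lies in $C \times Q$. For the reverse direction, take any $(s, q) \in D \cap (C \times Q)$: membership in $D$ forces $q = f(s)$, while membership in $C \times Q$ forces $s \in C$, so $(s, q) = (s, f(s)) \in f(C)$. Both inclusions close, so the equality holds uniformly in $C$.

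There is essentially no obstacle here; the only thing to double-check is that $D$ can be chosen independently of $C$, which is indeed the case because the graph of $f$ is determined by $f$ alone. A conceptual remark worth including is that the $D$ produced by this construction need not be a \emph{safe} constraint in the sense of Def.~\ref{def:constraint}, even when $f$ is a derivation: a strictly larger configuration $(s', q')$ with $s \preceq s'$ and $f(s) \preceq q'$ typically does not satisfy $q' = f(s')$, so it escapes $D$. This is consistent with the fact that function application preserves dominance on its own (Prop.~\ref{prop:der-dom}), without needing to piggyback on the safety of an underlying constraint.
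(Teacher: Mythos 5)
Your proof is correct and takes essentially the same route as the paper: both define $D$ as the graph of $f$ inside $\mathcal{S}\times Q$ and verify the two inclusions directly. Your closing remark about $D$ not being safe matches the paper's own discussion immediately following the proposition.
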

\begin{proof}
We let $D = \{  (s_1, \ldots, s_n, f (s_1, \ldots, s_n)) \,|\, (s_1, \ldots, s_n) \in \mathcal{S} \}$.

($\subseteq$)  
Let $c=(s_1,\ldots,s_n, f(s_1,\ldots,s_n)) \in f(C)$.
By definition, $(s_1,\ldots,s_n) \in C \subseteq \mathcal{S}$ and $f(s_1,\ldots,s_n) \in Q$.
So clearly $c \in C \times Q$.
By definition, we also have that $c \in D$ and therefore $c \in D \cap (C \times Q)$.

($\supseteq$) 
Let $c=(s_1,\ldots,s_n, f(s_1,\ldots,s_n)) \in D \cap (C \times Q)$.
Thus, $(s_1,\ldots,s_n, f(s_1,\ldots,s_n)) \in (C \times Q)$ and therefore $(s_1,\ldots,s_n) \in C$.
By definition, $(s_1,\ldots,s_n, f(s_1,\ldots,s_n)) \in f(C)$.
\end{proof}

From this proposition, it follows that the application of any $\preceq$-derivation can also be written as the application of a constraint, constructed as in the proposition, to the free product of the input set and the range of the derivation.
This does not need to be a safe constraint. However, since application of a $\preceq$-derivation preserves dominance and minimality, this particular, equivalent, constraint application also preserves dominance and minimality.
An example is the function $f : \mathbb{N} \to \mathbb{N}$ (with $\le$ as order on $\mathbb{N}$) defined as $f(n) = 2$. Clearly, this is a $\preceq$-derivation, and has the constraint $D=\{ (n, 2) \,|\, n \in \mathbb{N} \}$. Although this is not a safe constraint, we can conclude that it nevertheless preserves dominance and minimality if used in $D \cap (C \times \mathbb{N})$ for all $C \subseteq \mathbb{N}$.

The dual of Prop.~\ref{prop:derivation-constraint}, which states that for every application of $D \cap (C \times Q)$ on a $C \subseteq \mathcal{S}$ we can find a function $f: \mathcal{S} \to Q$ that accomplishes the same, is not true (not even for safe constraints).
A counter example is $D = \emptyset$, which is safe, but which cannot be mimicked by some function since that function application can never result in an empty set if $C \neq \emptyset$.
Also note that $D \cap (C \times Q)$ can give a result $C'$ such that $|C| < |C'|$ (because of the use of the free product).
Again, this cannot be achieved with the derivation-based way.

\begin{defi}[Abstraction]\label{def:abstraction}
Let $\mathcal{S} = Q_1 \times Q_2 \times \ldots \times Q_n$ be a configuration space, and let $C \subseteq \mathcal{S}$. The $k$-abstraction is a function $\mathcal{S} \to Q_1 \times \ldots \times Q_{k-1} \times Q_{k+1} \times \ldots \times Q_n$ defined as $ (q_1, q_2, \ldots, q_n) \downarrow {k} = (q_1, \ldots, q_{k-1}, q_{k+1}, \ldots, q_n)$. We define $C \downarrow {k} = \{ c \downarrow k \,|\, c \in C \}$.
\end{defi}
\begin{prop}\label{prop:abs-dom}
Application of $k$-abstraction preserves dominance.
\end{prop}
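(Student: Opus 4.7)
The plan is to unfold both definitions directly: set dominance (Def.~\ref{def:cdom}) on the abstracted sets, and the pointwise construction of $\downarrow k$ (Def.~\ref{def:abstraction}). Concretely, to show $C \downarrow k \preceq C' \downarrow k$ from $C \preceq C'$, I would pick an arbitrary element of $C \downarrow k$ and chase it back through the projection to $C$, invoke the dominance hypothesis there, and then push the resulting witness forward through the projection into $C' \downarrow k$.

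The steps, in order, are the following. First, fix an arbitrary $d \in C \downarrow k$. By the definition of $k$-abstraction applied to a set, $d = c \downarrow k$ for some $c = (q_1, \ldots, q_n) \in C$. Second, apply the assumption $C \preceq C'$ (set dominance) to obtain $c' = (q_1', \ldots, q_n') \in C'$ with $c \preceq c'$; by Def.~\ref{def:dominance} this means $q_i \preceq_{Q_i} q_i'$ for every $1 \le i \le n$. Third, consider the candidate witness $d' \definitie c' \downarrow k \in C' \downarrow k$. Fourth, verify $d \preceq d'$: by the definition of $\downarrow k$, the two tuples $d$ and $d'$ are obtained from $c$ and $c'$ by deleting exactly the $k$-th component, so the remaining $n-1$ componentwise inequalities $q_i \preceq_{Q_i} q_i'$ for $i \neq k$ are precisely what is needed to conclude $d \preceq d'$ in the configuration space $Q_1 \times \cdots \times Q_{k-1} \times Q_{k+1} \times \cdots \times Q_n$. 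Since $d$ was arbitrary, this yields $C \downarrow k \preceq C' \downarrow k$.

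There is no real obstacle here; the argument is essentially bookkeeping, since deleting one coordinate from a pair of tuples that agree componentwise under $\preceq$ automatically preserves the componentwise order on the surviving coordinates. The only place to be careful is to match Def.~\ref{def:abstraction} exactly (elementwise lifting $C \downarrow k = \{c \downarrow k \mid c \in C\}$) so that the existence of a dominating preimage in $C'$ transports cleanly to a dominating element in $C' \downarrow k$, rather than having to reconstruct the $k$-th coordinate. Since the proof is a two-line unfolding, I would present it in the paper in exactly that compressed form.
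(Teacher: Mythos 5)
Your proof is correct: picking a preimage under the projection, invoking set dominance to get a dominating configuration in $C'$, and projecting that witness forward is exactly the routine argument, and deleting the $k$-th coordinate indeed preserves the remaining componentwise inequalities. The paper itself gives no proof for this proposition (it is quoted from the Pareto Algebra reference), so there is nothing to diverge from; your two-line unfolding is the intended argument.
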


Abstraction does not preserve minimality because an abstraction can remove the single dimension that made some configuration incomparable to any other configuration. For instance, take $\mathcal{S} = \mathbb{N} \times \mathbb{N}$ with $\le$ as ordering on $\mathbb{N}$. Then $C = \{ (1,2), (2,1) \}$ is Pareto minimal. We have $C \downarrow 1 = \{ 2, 1\}$, which clearly is not Pareto minimal because $1 < 2$.

The following is not explicit in~\cite{pareto-algebra} but follows trivially.
Permutation allows us to reorder the dimensions of a configuration space.

\begin{defi}[Permutation]\label{def:perm}
Let $ \mathcal{S} = Q_1 \times Q_2 \times \ldots \times Q_n$ be a configuration space, let $C \subseteq \mathcal{S}$ be a set of configurations, and let $\pi : \{1, 2, \ldots, n\} \to \{1, 2, \ldots, n\}$ be a bijection, i.e., a permutation.
We define $\pi(\mathcal{S}) = Q_{\pi(1)} \times Q_{\pi(2)} \times \ldots \times Q_{\pi(n)}$
and also let $\pi$ denote the function $\mathcal{S} \to \pi(\mathcal{S})$ defined as
$\pi(q_1, q_2, \ldots, q_n) = (q_{\pi(1)}, q_{\pi(2)}, \ldots, q_{\pi(n)})$.
Furthermore, we define $\pi(C) = \{ \pi(c) \, | \, c \in C \}$.
\end{defi}
\begin{prop}\label{prop:perm-dom}
Permutation preserves dominance.
\end{prop}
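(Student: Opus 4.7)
The plan is to unfold Definitions~\ref{def:dominance}, \ref{def:cdom}, and~\ref{def:perm}, and observe that permutation merely relabels coordinates without altering the underlying coordinate-wise comparisons. Since permutation is a unary operator, Def.~\ref{def:sound} reduces to showing that $C \preceq C'$ implies $\pi(C) \preceq \pi(C')$ for all $C, C' \subseteq \mathcal{S}$.

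First I would fix such $C, C'$ with $C \preceq C'$ and pick an arbitrary $d \in \pi(C)$, which by Def.~\ref{def:perm} has the form $d = \pi(c)$ for some $c = (q_1, \ldots, q_n) \in C$. By set dominance there exists $c' = (q'_1, \ldots, q'_n) \in C'$ with $c \preceq c'$; by Def.~\ref{def:dominance} this gives $q_i \preceq_{Q_i} q'_i$ for every $1 \le i \le n$.

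Next I would invoke the fact that $\pi$ is a bijection of $\{1, \ldots, n\}$, so the indexed family of inequalities $q_{\pi(i)} \preceq_{Q_{\pi(i)}} q'_{\pi(i)}$ (for $1 \le i \le n$) is just a reindexing of the same family and therefore holds entry-by-entry. Read in the configuration space $\pi(\mathcal{S}) = Q_{\pi(1)} \times \cdots \times Q_{\pi(n)}$, this is exactly the coordinate-wise condition for $\pi(c) \preceq \pi(c')$.

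Since $\pi(c') \in \pi(C')$ by construction, every element of $\pi(C)$ is dominated by some element of $\pi(C')$, yielding $\pi(C) \preceq \pi(C')$ as required. I do not anticipate a genuine obstacle: dominance is defined coordinate-wise and $\pi$ is a pure relabeling, so the only work is the index bookkeeping above.
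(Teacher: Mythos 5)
Your proof is correct: the paper states this proposition without proof (noting only that it ``follows trivially''), and your argument --- unfolding set dominance, picking a witness $c' \in C'$ for each $c \in C$, and observing that the coordinate-wise inequalities $q_i \preceq_{Q_i} q'_i$ are merely reindexed by the bijection $\pi$ --- is exactly the intended direct verification. No gaps; the reduction of Def.~\ref{def:sound} to the unary case and the check that $\pi(c) \preceq \pi(c')$ holds in $\pi(\mathcal{S})$ are both handled properly.
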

\begin{prop}
Permutation preserves minimality.
\end{prop}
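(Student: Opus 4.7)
The plan is to establish the equality $\mathit{min}(\pi(C)) = \pi(\mathit{min}(C))$ via an order-isomorphism argument. The core observation is that for all $c, c' \in \mathcal{S}$ we have $c \preceq c'$ iff $\pi(c) \preceq \pi(c')$: by Def.~\ref{def:dominance}, both sides unfold to the conjunction ``$q_i \preceq_{Q_i} q_i'$ for all $i$,'' with $\pi$ merely permuting the order in which the conjuncts are listed. Since $\pi$ is also bijective on elements (Def.~\ref{def:perm}), this equivalence lifts to strict dominance and makes $\pi$ an order-isomorphism between $(\mathcal{S}, \preceq)$ and $(\pi(\mathcal{S}), \preceq)$.

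First I would use this to show that $\pi(\mathit{min}(C))$ is itself Pareto-minimal in $\pi(\mathcal{S})$: a hypothetical witness $\pi(c_1) \prec \pi(c_2)$ with $c_1, c_2 \in \mathit{min}(C)$ would, by the order-isomorphism, yield $c_1 \prec c_2$, contradicting the Pareto-minimality of $\mathit{min}(C)$ per Def.~\ref{def:min}.

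Next I would show $\pi(\mathit{min}(C)) \equiv \pi(C)$. One direction of set dominance follows from $\mathit{min}(C) \subseteq C$. For the other, take any $\pi(c) \in \pi(C)$; since $C \equiv \mathit{min}(C)$, pick $c' \in \mathit{min}(C)$ with $c \preceq c'$, and then the order-isomorphism gives $\pi(c) \preceq \pi(c')$ with $\pi(c') \in \pi(\mathit{min}(C))$.

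Finally, I would invoke the uniqueness theorem for Pareto-minimal equivalents stated in Sec.~\ref{sec:pa}: both $\pi(\mathit{min}(C))$ and $\mathit{min}(\pi(C))$ are Pareto-minimal sets equivalent to $\pi(C)$, hence equal. The only mild prerequisite is that $(\pi(C), \preceq^{-1})$ be well-ordered so that $\mathit{min}(\pi(C))$ is defined at all, but this transfers from well-ordering of $(C, \preceq^{-1})$ along the order-isomorphism. There is really no obstacle here — permutation is a syntactic reindexing to which the componentwise dominance definition is blind, so every Pareto-theoretic notion transports verbatim.
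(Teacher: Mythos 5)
Your proposal is correct. The paper itself offers no proof for this proposition --- it introduces permutation with the remark that the surrounding facts ``follow trivially'' --- and your order-isomorphism argument is precisely the natural formalization of that triviality: componentwise dominance is blind to the ordering of the components, so $\pi$ preserves and reflects both $\preceq$ and $\prec$, and minimality, equivalence, and the uniqueness of the Pareto-minimal equivalent all transport, yielding $\mathit{min}(\pi(C)) = \pi(\mathit{min}(C))$ as required by Def.~\ref{def:complete}.
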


Finally, we need one more concept from Pareto Algebra that we use in our quality- and resource-management specialization later on.

\begin{defi}[Alternatives]\label{def:union}
Let $ \mathcal{S}$ be a configuration space, and let $C_1, C_2 \subseteq \mathcal{S}$.
Then $C_1 \cup C_2$ is called the set of alternatives of $C_1$ and $C_2$.
\end{defi}
\begin{prop}\label{prop:union-dom}
Alternatives preserve dominance.
\end{prop}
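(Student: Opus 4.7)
The plan is to unfold Def.~\ref{def:sound} for the binary operator $\cup$ and reduce the claim to the defining property of set dominance from Def.~\ref{def:cdom}. Concretely, I would fix configuration sets $C_1, C_2, C_1', C_2' \subseteq \mathcal{S}$ with $C_1 \preceq C_1'$ and $C_2 \preceq C_2'$, and show $C_1 \cup C_2 \preceq C_1' \cup C_2'$.

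First, I would pick an arbitrary $c \in C_1 \cup C_2$ and split on which operand it belongs to. In the case $c \in C_1$, the hypothesis $C_1 \preceq C_1'$ yields some $c' \in C_1'$ with $c \preceq c'$; since $C_1' \subseteq C_1' \cup C_2'$, this $c'$ is a witness in the union. The case $c \in C_2$ is symmetric using $C_2 \preceq C_2'$. Since $c$ was arbitrary, Def.~\ref{def:cdom} gives $C_1 \cup C_2 \preceq C_1' \cup C_2'$, which is exactly dominance preservation for $\cup$ viewed as a binary operator on configuration sets.

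There is no real obstacle here: the argument is a one-line case split, and the only subtlety is to remember that the witness provided by the premise lies in one of the operand sets and hence automatically in their union, so no use of the order structure beyond transitivity-free witness passing is needed. I would therefore keep the proof short and purely set-theoretic, with no appeal to Pareto minimality or to the preceding operations.
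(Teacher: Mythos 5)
Your argument is correct and is the standard direct unfolding of Def.~\ref{def:cdom}: every $c$ in the union finds its dominating witness in the corresponding primed operand, which is contained in the union of the primed operands. The paper itself states this proposition without proof (it is imported from the Pareto Algebra reference), and your case split is exactly the argument one would supply.
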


The alternatives operation does not preserve minimality.
For instance, take $\mathcal{S} = \mathbb{N}$ with $\le$ as ordering on $\mathbb{N}$. Then both $C_1 = \{ 1 \}$ and $C_2 = \{2 \}$ are Pareto minimal. However, $C_1 \cup C_2 = \{ 1, 2\}$ clearly is not Pareto minimal because $1 < 2$.


\section{An Interface for Quality and Resource Management}\label{sec:components}
The information needed for quality- and resource-management purposes is captured in our framework by an interface model for components.
A component has an input, an output, a required budget, a provided budget, a quality, and it may be parameterized.
The input, required budget, and parameters model the {input assumptions}, and the output, provided budget and quality model the {output guarantees} of the component~\cite{AH01b}.
All six parts are modeled by posets.
Each of these posets can be multi-dimensional, to capture cases with multiple inputs, outputs, resource requirements, etc.
This is the \emph{quality- and resource-management interface}  of the component.

\begin{defi}[QRM interface]\label{def:qrmi}
The QRM interface of a {component} is a set of configurations from a six-dimensional configuration space $Q_i \times Q_o \times Q_r \times Q_p \times Q_q \times Q_x$.
By convention, the first dimension $Q_i$ models the input of the component, the second dimension $Q_o$ models the output, the third dimension $Q_r$ models the required budget,  the fourth dimension $Q_p$ models the provided budget, the fifth dimension $Q_q$ models the quality, and the sixth dimension $Q_x$ models the parameters.
Different components can use different configuration spaces.
\end{defi}

A component can thus have a QRM interface that has multiple configurations. Each configuration models a specific working point of the component with specific values for input, output, required budget, provided budget, quality and parameters.
These values usually are the result of some kind of analysis that is applied to the technology that is used to implement the component.
External actors, e.g., end users, can typically control selection of subsets of configurations through the parameters. The freedom that remains to choose the final configuration is left to the quality and resource manager for optimization.

Figure~\ref{fig:component} shows the graphical format that we use for a single configuration $c$ of the QRM interface  $C \subseteq Q_i \times Q_o \times Q_r \times Q_p \times Q_q \times Q_x$ of some component.
Depicting the complete interface with all configurations is often problematic as the values of input $i$, output $o$, required budget $r$, provided budget $p$, quality $q$, and parameters $x$ differ between configurations.
The interpretation of the partial-order relation on input is that if $i \preceq i'$, then $i'$ is considered \emph{weaker}\footnote{For convenience we do not distinguish between the strict and non-strict versions for the description of the order relations.}.
For output, we interpret $o \preceq o'$ as that $o'$ is \emph{stronger} than $o$.
For the required budget, we say that $r \preceq r'$ means that $r'$ is \emph{smaller} than $r$.
For the provided budget, we say that $p \preceq p'$ means that $p'$ is \emph{larger} than $p$.
The quality dimension is interpreted as follows: $q \preceq q'$ means that $q'$ is \emph{better} than $q$.
Parameters are typically unordered; there are no better or worse parameter values. Whether a  configuration is better or worse than, or incomparable to, another configuration is determined by the other five parts of the configurations.

\begin{figure}
\centering
\includegraphics[width=0.5\linewidth]{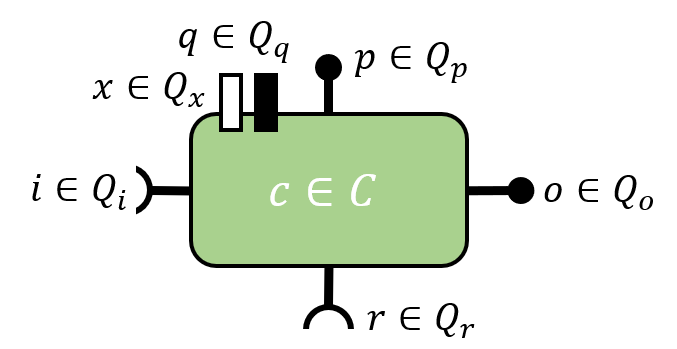}
\caption{The interface for quality and resource management.}\label{fig:component}
\end{figure}

In our framework, input- and required-budget specifications capture {requirements}, and output, provided budget and quality capture promises.
Requiring more is worse than requiring less. We have diametrically opposed interpretations of $\preceq$ for input and output, and also for required budget and provided budget, to be able to use the dominance relation of Def.~\ref{def:dominance}.
The dominating configuration thus has a weaker input requirement, a stronger output guarantee, a smaller required budget, a larger provided budget, a better quality and better, i.e., equal, parameters.
This \emph{alternating} approach to dominance is similar to the notion of alternating simulation that is used to define refinement for interface automata~\cite{AH01}, and is also used, for instance, in contract-based design to define refinement between contracts~\cite{B08}.
Note that if we have a poset $Q_{io}$ with partial order $\preceq_i$ that models the input and we need a poset for modeling the output that has the same semantic domain, we can just take the dual poset, i.e.,  the same set $Q_{io}$ with partial order $\preceq_o$ where $x \preceq_o y$ if and only if $y \preceq_i x$ (i.e., $\preceq_o{}={}\preceq_i^{-1}$). Clearly, the same goes for the required and provided budgets.

\begin{exa}\label{ex:2}
Consider Ex.~\ref{ex:1}. The posets used to model the component parts are shown at the bottom of the figure.
Now let us look at three configurations of the Scaler component, that scale FHS@60 input video to different resolutions and rates. Formally, these are the following\footnote{The Scaler component (and other components also) have many more configurations than mentioned in the examples. We only show some of these configurations to make our examples concise.}:
\[
\begin{array}{rccccccl}
& \mathrm{input} & \mathrm{output}& \mathrm{required}& \mathrm{provided}& \mathrm{quality}& \mathrm{parameters} \\
& (\mathbb{N}^3, =) & (\mathbb{N}^3, =) & (\mathbb{N}^2, \ge^2) & (\{ \bot \}, =) & (\mathbb{N}, \le) & (\mathbb{N}^2, =) \\
\hline
s_1 = \big(& \mathrm{FHD@60}, & \mathrm{{HD+}@60}, & (201,15), & \bot, & 60, & \mathrm{HD+} & \big) \\
s_2 = \big(& \mathrm{FHD@60}, & \mathrm{{HD+}@30}, & (160,15), & \bot, & 30, & \mathrm{HD+} &\big) \\
s_3 = \big(& \mathrm{FHD@60}, & \mathrm{HD@60}, & (171,15), & \bot, & 60, & \mathrm{HD} &\big) \\
\end{array}
\]
Here we use the  \emph{void poset} $\{ \bot \}$ with partial order $=$ to model the provided budget of the Scaler.
Note that the parameters poset uses equality as its ordering relation as parameters are unordered.
Therefore, $s_3$ neither dominates nor is dominated by either $s_1$ or $s_2$ because it has a different parameter value.
Because equality is also used for the output, $s_1$ neither dominates nor is dominated by $s_2$.
Now suppose that the Scaler has two other configurations that employ a different scaling algorithm that needs more buffering space:
\[
\begin{array}{rccccccl}
s_4 = \big(& \mathrm{FHD@60}, & \mathrm{{HD+}@60}, & (201,30), & \bot, & 60, & \mathrm{HD+} &\big) \\
s_5 = \big(& \mathrm{FHD@60}, & \mathrm{{HD+}@30}, & (160,30), & \bot, & 30, & \mathrm{HD+} & \big)
\end{array}
\]
Then we have that $s_4 \preceq s_1$ and $s_5 \preceq s_2$ because the different scaling algorithm requires a larger budget but does not provide anything in terms of quality or output.
Thus, $C = \{ s_1, \ldots, s_5 \}$, is not Pareto minimal. Its Pareto-minimal and equivalent component is $D = \{ s_1, s_2, s_3 \}$, because $C \equiv D$, and  $D$ is Pareto minimal because $\{s_1, s_2, s_3\}$ forms an anti-chain with respect to the dominance relation.
Adding the algorithm as an actual parameter would result in $\{ s_1, \ldots, s_5 \}$ being Pareto minimal.
Also the situation where $s_4$ and $s_5$ have an additional quality such as using less energy, would have this result.

The parameters can be used for several purposes. In this example, the parameters can be used to control the output resolution of the Scalers.
We mentioned that parameters also can be used to differentiate between configurations that have different implementation algorithms. Such implementation alternatives may or may not be equal in terms of the input, output, budgets and quality. A related example is that they can be used to differentiate between different versions of components, where the version can be captured as an unordered quality to allow on-line version management.
\end{exa}

A quality- and resource-management challenge arises when there are multiple components that implement the QRM interface. All these components must be combined in a proper way while optimizing some objective(s).
In the next section, we therefore explain how Pareto-algebraic concepts can be used to create compositions of components that implement the QRM interface. These compositions can again be described by a QRM interface. The whole composition process is thus expressed in Pareto-algebraic terms, and the notion of Pareto minimality applies. This naturally gives the necessary tools for multi-objective optimization.


\section{Composition of QRM interfaces}\label{sec:composition}
We see composition as an \emph{aggregation of alternatives}, and realize it with the Pareto-algebraic operations introduced in Sec.~\ref{sec:pa}. The basic idea is that component interfaces can be realized with alternative options that are then aggregated in compositions as already illustrated in Fig.~\ref{fig:video}.
The \emph{alternatives pattern} and \emph{aggregation pattern} are defined in the following subsections.

\subsection{Alternatives pattern}
We first generalize the binary alternatives operation of Def.~\ref{def:union} to an $n$-ary operation $\cup^n$.
Note that the alternatives operation requires that the interfaces are from the same configuration space. 
Therefore, to apply the alternatives operation to interfaces
$C_1 \subseteq \mathcal{S}_1, \ldots, C_n \subseteq \mathcal{S}_n$, we must first \emph{normalize}
$C_1, \ldots, C_n$ to $C_1', \ldots, C_n'$ respectively, such that $C_1' \subseteq \mathcal{S}', \ldots, C_n' \subseteq \mathcal{S}'$ for some configuration space $\mathcal{S}'$.
This normalization is done through proper derivations for each part of the interface, followed by abstraction and minimization. The pattern for taking the alternatives of $n$ interfaces is the following:
\begin{equation}\label{eq:alt}
\begin{array}{rl}
(1) & \mathit{min} \circ (\downarrow 1)^{6} \circ f^1_x \circ f^1_q \circ f^1_p \circ f^1_r \circ f^1_o \circ f^1_i \\
\vdots\,\,\, \\
(n) & \mathit{min} \circ (\downarrow 1)^{6} \circ f^n_x \circ f^n_q \circ f^n_p \circ f^n_r \circ f^n_o \circ f^n_i \\
(n+1) & \mathit{min} \circ \cup^n
\end{array}
\end{equation}

Here the $f_x^1, \ldots, f_i^n$ are the $\preceq$-derivations that normalize the input, output, etcetera.
Since these derivations add dimensions to the configuration space, we need to abstract from the original dimensions to get back to a six-dimensional configuration space with the QRM interpretation.
The first step normalizes the first interface, the second step normalizes the second interface,
etcetera. The $n+1$-th step applies the alternatives operation to all normalized interfaces and minimizes the result.

The following proposition states that we can write the normalization as a constraint applied to a free product with the normalized configuration space.

\begin{prop}\label{prop:rewrite}
Let $\mathcal{S}$ be a configuration space, and let $f = f_x \circ f_q \circ f_p \circ f_r \circ f_o \circ f_i$ with $f_i : \mathcal{S} \to Q_i$, $f_o : \mathcal{S} \times Q_i \to Q_o$, etcetera.
Then, some constraint $D$ exists such that $f(C) = D \cap(C \times Q_i \times Q_o \times Q_r \times Q_p \times Q_q \times Q_x)$ for all $C \subseteq \mathcal{S}$.
\end{prop}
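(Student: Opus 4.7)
The plan is to prove this by induction on the number of function applications, using Proposition~\ref{prop:derivation-constraint} as the workhorse. The composite $f$ is built from six successive applications, each of which extends the ambient configuration space by one poset dimension, so iterating the single-derivation result should suffice.

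First I would set up notation: write $\mathcal{S}_0 = \mathcal{S}$ and $\mathcal{S}_k = \mathcal{S}_{k-1} \times Q_k$ for $k = 1, \ldots, 6$, using $(Q_1, \ldots, Q_6) = (Q_i, Q_o, Q_r, Q_p, Q_q, Q_x)$, and relabel the given functions as $g_k : \mathcal{S}_{k-1} \to Q_k$. The inductive claim would be: for each $k \in \{1, \ldots, 6\}$ there exists $D_k \subseteq \mathcal{S}_k$ such that $(g_k \circ \cdots \circ g_1)(C) = D_k \cap (C \times Q_1 \times \cdots \times Q_k)$ for every $C \subseteq \mathcal{S}$. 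The proposition is then the case $k = 6$ with $D := D_6$.

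For the base case ($k = 1$), Proposition~\ref{prop:derivation-constraint} applied to $g_1$ directly supplies $D_1$. For the inductive step, I would apply Proposition~\ref{prop:derivation-constraint} to $g_{k+1}$ on the underlying space $\mathcal{S}_k$ to obtain a constraint $E_{k+1} \subseteq \mathcal{S}_{k+1}$ with $g_{k+1}(C') = E_{k+1} \cap (C' \times Q_{k+1})$ for every $C' \subseteq \mathcal{S}_k$. Substituting $C' = (g_k \circ \cdots \circ g_1)(C) = D_k \cap (C \times Q_1 \times \cdots \times Q_k)$ and invoking the distributivity identity $(A \cap B) \times Q = (A \times Q) \cap (B \times Q)$ lets me pull the intersection with $C \times Q_1 \times \cdots \times Q_{k+1}$ outward, so setting $D_{k+1} = E_{k+1} \cap (D_k \times Q_{k+1})$ closes the induction.

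The main obstacle is pure bookkeeping: keeping track of which Cartesian product each intermediate set sits in and invoking the right set-theoretic distributivity in the right direction. There is no genuine mathematical subtlety, since the graph of $f$ viewed as a function $\mathcal{S} \to Q_i \times Q_o \times Q_r \times Q_p \times Q_q \times Q_x$ already serves as a valid $D$; the induction merely shows how to assemble it from the per-step graphs supplied by Proposition~\ref{prop:derivation-constraint}, which is the cleanest route since it reuses an already-proven result instead of constructing $D$ from scratch.
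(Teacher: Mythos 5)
Your proposal is correct and follows essentially the same route as the paper's proof: the paper also iterates Prop.~\ref{prop:derivation-constraint} once per component function, uses the distributivity $(X \cap Y) \times Z = (X \times Z) \cap (Y \times Z)$ to pull the product $C \times Q_i \times \cdots$ outward, and accumulates the constraint as $D' = D_o \cap (D_i \times Q_o)$, exactly your $D_{k+1} = E_{k+1} \cap (D_k \times Q_{k+1})$. The only difference is presentational: you package the argument as an explicit induction, whereas the paper unrolls the first two steps of the same chain of equalities and elides the rest.
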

\begin{proof} Let $D_i$, $D_o$, etcetera, be the constraints as defined in Prop.\ \ref{prop:derivation-constraint}, for $f_i$, $f_o$, etcetera.
\[
\begin{array}{rl}
f_x \circ f_q \circ f_p \circ f_r \circ f_o \circ f_i(C) \\
	= & \mathrm{Prop.~\ref{prop:derivation-constraint}} \\
f_x \circ f_q \circ f_p \circ f_r \circ f_o \big(  D_i \cap (C \times Q_i ) \big) \\
	= & \mathrm{Prop.~\ref{prop:derivation-constraint}} \\
f_x \circ f_q \circ f_p \circ f_r \Big(  D_o \cap \big(  ( \underbrace{D_i}_{X} \cap \underbrace{(C \times Q_i )}_{Y} ) \times \underbrace{Q_o}_{Z}  \big)  \Big) 
\\
	= & (X \cap Y) \times Z = (X \times Z) \cap (Y \times Z) \\
f_x \circ f_q \circ f_p \circ f_r \Big(  D_o \cap \big( (D_i \times Q_o) \cap (C \times Q_i \times Q_o )  \big)  \Big) \\
	= & X \cap (Y \cap Z) = (X \cap Y) \cap Z \\
	
f_x \circ f_q \circ f_p \circ f_r \Big(  \big( D_o \cap  (D_i \times Q_o) \big) \cap (C \times Q_i \times Q_o )  \Big)  \\
	= &  D_o \cap  (D_i \times Q_o) = D'  \\

f_x \circ f_q \circ f_p \circ f_r \big( D'  \cap (C \times Q_i \times Q_o )  \big)  \\
	= & \ldots \\

D  \cap (C \times Q_i \times Q_o \times Q_r \times Q_p \times Q_q \times Q_x)
\end{array}
\]
  \par\vspace{-1.3\baselineskip}\qedhere
\end{proof}

This proposition thus gives us a second way to accomplish normalization based on constraining the free product with the normalized configuration space $Q_i \times Q_o \times Q_r \times Q_p \times Q_q \times Q_x$.
This results in the following constraint-based formulation of the alternatives pattern:
\begin{equation}\label{eq:alt2}
\begin{array}{rl}
(1) & \mathit{min} \circ (\downarrow 1)^{6} \circ\cap D_1 \circ \times({}\cdot{}, Q_i \times Q_o \times Q_r \times Q_p \times Q_q \times Q_x)  \\
\vdots\,\,\, \\
(n) & \mathit{min} \circ (\downarrow 1)^{6} \circ\cap D_n \circ \times({}\cdot{}, Q_i \times Q_o \times Q_r \times Q_p \times Q_q \times Q_x)  \\
(n+1) & \mathit{min} \circ \cup^n
\end{array}
\end{equation}

The proposition above shows that any normalization with derivations as in Eq.~\ref{eq:alt} can be written using constraints as in Eq.~\ref{eq:alt2}. The reverse is not true, i.e., the constraint-based way is strictly more expressive than the derivation-based way. In line with the reasoning below Prop.~\ref{prop:derivation-constraint},  constraint-based normalization applied to a set of configurations $C$ can result in the empty set (if the constraint is empty) and it may give a result $C'$ such that $|C| < |C'|$. Both cannot be achieved with the derivation-based pattern.

Although the constraint-based alternatives pattern is more expressive, it can be easier to reason about the preservation of dominance and minimality - which is crucial in the Pareto Algebra framework - with the derivation-based pattern.  The example below Prop.~\ref{prop:derivation-constraint} shows that a simple $\preceq$-derivation cannot be mimicked with a safe constraint. As we see below however, this kind of derivation is useful for normalization.
Only using safe constraints in Eq.~\ref{eq:alt2} thus might limit the practical applicability.
This consideration is the reason why we present both Eq.~\ref{eq:alt} and Eq.~\ref{eq:alt2} as patterns for taking alternatives of components.

\subsection{Aggregation pattern}
Aggregation combines $n$ constituent QRM interfaces into a new QRM interface. It consists of several steps and starts from a straightforward generalization of the free product in Def.~\ref{def:fp} that combines $n$ interfaces into a $6n$-dimensional configuration space.
Second, we apply constraints originating from, e.g., i/o and budget matching, and constraints on allowed parameter combinations to restrict the composition to feasible configurations.
Third, we apply operations that express the semantics of the aggregation in the application domain.
We assume that we have an aggregation operator for all six parts (input, output, etcetera) of the interface.
Typically such an aggregation operator takes the parts of the constituent interfaces and creates a new part for the aggregation using a number of building-block functions, most notably domain-specific $\preceq$-derivations that can be applied using Def.~\ref{def:derivation}.
Fourth, after application of the proper derivations, we apply any relevant constraints on the resulting configurations.
Fifth, we reconcile the result with our notion of QRM interface by abstraction of dimensions that represent the constituent interfaces. This results again in a six-dimensional configuration space of the aggregation with the usual interpretation of the dimensions. 
Finally, we apply Pareto minimization. All this is expressed in the following template:
\begin{equation}\label{eq:aggr}
\mathit{min} \circ (\downarrow 1)^{6n} \circ \cap D_a \circ f_x \circ f_q \circ f_p \circ f_r \circ f_o \circ f_i \circ \cap D_c \circ \times^n
\end{equation}
\noindent
Here $\times^n$ is the generalized free product, and $f_i, \ldots$ are the aggregation functions for input, output, etcetera.
The constraints are split into two parts: $D_c$, applied directly after the free product on the constituent parts, and $D_a$, applied after the derivations that compute the interface values of the aggregation.
Typically, i/o and budget matching, and parameter constraints are part of $D_c$. 
Domain-specific quality constraints, e.g., a constraint on a quality such as output rate, or constraints on the parameters of the composition can be part of $D_a$.
Finally, $(\downarrow 1)^{6n}$ abstracts the dimensions of the constituent interfaces, and $\mathit{min}$ is the minimization operation.

Note that an actual implementation of this aggregation can be done much more efficiently than what is shown here.
Typically, the application of constraints and the derivations and abstractions can be inlined during the creation of the free product.
Since abstraction is always used in our composition and it does not preserve minimality, we always need to recalculate the set of Pareto-minimal configurations. This can be done using, e.g., the Simple Cull algorithm with quadratic worst-case complexity.
Computation of the free product with the inline calculation of the constraints and derivations, can on its turn also be inlined in the Simple Cull algorithm, making it suitable for on-line operation~\cite{pareto-calc}.

Eq.~\ref{eq:aggr} can be rewritten to the following form in which the derivations are replaced by a constraint in a similar way as is shown for the alternatives pattern above:
\begin{equation}\label{eq:aggr2}
\mathit{min} \circ (\downarrow 1)^{6n} \circ \cap D \circ \times({}\cdot{}, Q_i \times Q_o \times Q_r \times Q_p \times Q_q \times Q_x) \circ \times^n
\end{equation}
This pattern combines all the constraints in $D_c$ and $D_a$ with those originating from the aggregation functions in a single constraint $D$.

Composition of a number of components is accomplished by repeated application of the alternatives and aggregation patterns of Eq.~\ref{eq:alt}, Eq.~\ref{eq:alt2}, Eq.~\ref{eq:aggr}, and Eq.~\ref{eq:aggr2}:
composition is an aggregation of alternatives.
If we use dominance-preserving constraints in these operations, then the total composition function also preserves dominance since the other Pareto-algebraic operations used in the operation preserve dominance.
Furthermore, since all operations end with a minimization, the total composition function also preserves minimality.

\subsection{Refinement}
The next theorem formalizes refinement in our setting. If component implementations are at least as good as is stated in their quality and resource management interfaces, then the composition of the implementations is at least as good as the composition based on the interfaces.

\begin{thm}[Refinement]\label{th:refinement}
Let $C_1 \subseteq \mathcal{S}_1, \ldots, C_n \subseteq \mathcal{S}_n$ and $C_1' \subseteq \mathcal{S}_1, \ldots, C_n' \subseteq \mathcal{S}_n$ be components, let $C_i \preceq C_i'$ for all $1 \le i \le n$, and
let $f: 2^{\mathcal{S}_1} \times \ldots \times 2^{\mathcal{S}_n} \to 2^\mathcal{S}$ be a function composition that only uses the basic Pareto-algebraic operations (free product, constraint application with safe constraints, derivation, abstraction, and alternatives, e.g., an aggregation of alternatives).
Then, $f(C_1, \ldots, C_n) \preceq f(C_1', \ldots, C_n')$.
\end{thm}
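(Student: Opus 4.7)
The plan is a straightforward structural induction on the composition $f$. Because $f$ is built exclusively from the basic Pareto-algebraic operations listed in Sec.~\ref{sec:pa}, one can view $f$ as a finite syntactic tree whose leaves are the $n$ input slots and whose internal nodes are labelled by free product, safe-constraint application, $\preceq$-derivation, abstraction, or alternatives. The objective at each node is to lift the set-dominance relation from the children to the parent.

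First, I would state and prove a small closure lemma: if an outer operator $g$ preserves dominance (in the sense of Def.~\ref{def:sound}) and each of the inner operators $h_1, \ldots, h_k$ feeding into it also preserves dominance, then the composition $g \circ (h_1, \ldots, h_k)$ preserves dominance. This follows immediately by chaining the implications of Def.~\ref{def:sound}: the hypothesis $C_i \preceq C_i'$ is pushed through each $h_j$ and then through $g$.

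Second, I would invoke the dominance-preservation results already established, namely Prop.~\ref{prop:fcp-dom} for free product, Prop.~\ref{prop:con-dom} for safe-constraint application, Prop.~\ref{prop:der-dom} for $\preceq$-derivations, Prop.~\ref{prop:abs-dom} for abstraction, and Prop.~\ref{prop:union-dom} for alternatives. Each atomic construct admitted by the theorem thus preserves dominance. The induction then runs as expected: at the leaves, the identity trivially preserves dominance since $C_i \preceq C_i'$ by assumption; at each internal node, the closure lemma combined with the relevant proposition extends the property to the subtree rooted there. Applied at the root, this yields $f(C_1, \ldots, C_n) \preceq f(C_1', \ldots, C_n')$ as required.

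There is no genuine obstacle here — the work is essentially bookkeeping on the composition tree — but the one point needing care is the safety hypothesis on constraints. Prop.~\ref{prop:con-dom} guarantees dominance preservation only for \emph{safe} constraints, and this restriction is precisely what the theorem statement imposes; if arbitrary constraints were permitted the induction would break at the constraint nodes. No appeal to Prop.~\ref{prop:rewrite} or to the constraint-based reformulations of Eqs.~\ref{eq:alt2} and \ref{eq:aggr2} is needed, since the theorem admits derivations and safe constraints as separate primitives.
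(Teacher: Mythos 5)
Your proposal is correct and follows essentially the same route as the paper: an induction over the structure of $f$ (the paper phrases it as induction on the number of basic operations, you as structural induction on the composition tree, which amounts to the same thing), with each step discharged by the dominance-preservation results of Props.~\ref{prop:fcp-dom}, \ref{prop:con-dom}, \ref{prop:der-dom}, \ref{prop:abs-dom} and \ref{prop:union-dom}. Your explicit remark that safety of constraints is the load-bearing hypothesis is a point the paper leaves implicit, but it does not change the argument.
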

\begin{proof}
By induction on the number of the basic Pareto-algebraic operations that we use in $f$.
The base case for a single operation is proven by dominance preservation of the basic operators, see Props.~\ref{prop:fcp-dom}, \ref{prop:con-dom}, \ref{prop:der-dom}, \ref{prop:abs-dom},  \ref{prop:union-dom}.
Now suppose that the theorem holds for $k-1$ operations.
Then we have composed an intermediate configuration set $D$ from the set $C_1,\ldots,C_n$, and we can compose an intermediate configuration set $D'$ from the set $C_1',\ldots,C_n'$ such that $D \preceq D'$.
Applying the next operation to $D$ and $D'$ results in $E$ and $E'$ respectively.
By dominance preservation of the composition operators and the induction hypothesis that $D \preceq D'$, we have that $E \preceq E'$. 
\end{proof}

A component declares a QRM interface consisting of a number of configurations $C$ using Def.\ \ref{def:qrmi}. This is an abstraction of the real configurations of the component, the implementation, that can also be expressed as a QRM interface $C'$, but which may not be exactly known.
The assumption that we make, however, is that the implementation is at least as good as declared. That is to say that the implementation dominates the interface: $C \preceq C'$.
Note that this allows the implementation to have more, but also less configurations than declared in the interface.

The theorem above is defined in terms of dominance of sets of configurations without taking a particular cost function into account.
Now suppose that we have created a composition $f$ based on the interface models and using a certain cost function we find an optimal configuration of this composition.
I.e., we have $f(C_1, \ldots, C_n) = C$ and the cost function selects $c \in C$ as optimum.
We can trace for each interface which configuration was used to achieve this optimum: let $c_i \in C_i$ be the configurations used to obtain $c$.
The question then is what happens when in an implementation, components and configurations are used that are refinements of the interface models, leading to the realization of the configuration $c'$ instead of $c$?
According to the definition of set dominance, for every interface configuration $c_i \in C_i$, there exists an implementation configuration $c_i'$ in the implementation $C_i'$ of the component, that is at least as good: $c_i \preceq c_i'$.
From Th.~\ref{th:refinement}, it follows that there exists a configuration $c'$ in the composition of the implementations such that $c \preceq c'$. This thus means that an optimization done based on the interfaces is \emph{safe} in the sense that the corresponding implementations yield a result that is at least as good for any arbitrary cost function.

The monotonic nature of our framework fits well with other formalisms with a monotonic nature that may be used to model and analyze the relation between the various quality and resource management aspects of the component configurations.
An example is the synchronous dataflow (SDF) formalism~\cite{sdf} that is suitable for analyzing streaming applications such as the video-processing system of our motivating example.
It is well-known that this formalism is monotone in the sense that reducing the execution times of actors\footnote{An actor is part of the SDF formalism and models some (computational) task with a fixed execution time.} will not result in a decrease of throughput. This is used in \cite{dfrefine} for compositional refinement that preserves key real-time performance metrics such as throughput and latency.
 We can also use the monotonicity in our setting.
Suppose that some component can be modeled using SDF, and that its quality is throughput and its required budget relates directly  to the execution times of the actors.
SDF analysis gives concrete values for several component configurations, i.e., the analysis calculates the throughput for a number of required budget values (actor execution times).
Now let us consider the composition with another component that provides the budget for the actor execution.
The monotonicity of SDF in combination with Thm.~\ref{th:refinement} provides several robustness benefits.
First, if the provided budget is larger than the required budget, then the throughput will not decrease because of SDF monotonicity. This means that the partial order on the actor execution budget can be a more flexible partial order than just equality.
This eases implementation of the providing component: not exactly the required budget should be delivered, but a slightly larger budget (just to be safe) is also good according to Thm.~\ref{th:refinement}.
Second, usually the worst-case execution time of actors is used during SDF analysis. Because of monotonicity of SDF, an implementation will not have worse throughput. In terms of our framework, the component implementation dominates the interface that was used for composition.
By Thm.~\ref{th:refinement}, this is no problem and the result is at least as good as was predicted based on the interface.


\section{Derivations and constraints for composition}\label{sec:recipe2}
In this section, we detail the use of Eq.~\ref{eq:alt} and Eq.~\ref{eq:aggr} for composition, which we see as an aggregation of alternatives, using the running example.
The use of the constraint-based way for composition, as defined by Eq.~\ref{eq:alt2} and Eq.~\ref{eq:aggr2}, is used and further discussed in Sec.~\ref{sec:qrml}, where we introduce a domain-specific language (DSL) for specifying the QRM view of a system with its semantics in terms of the framework developed in this paper.

\subsection{Alternatives}
The description of our running example in Sec.\ \ref{sec:ex} does not contain alternatives.
A variation of the case, however, also considers a software implementation of the scaler functionality that performs the processing.
Remember that the hardware scaler provides a three-fold budget to model the supported number of streams, the computational power, and the available memory segments respectively.
The software scaler can support a virtually unlimited number of streams, has virtually unlimited memory resources, but has less computational power. It therefore only provides computation in its provided budget.
In the following example, we show how we can create a component that models the choice between a hardware scaler and a software scaler.

\begin{exa}[Alternatives]\label{ex:alt}
The hardware scaler has configuration space $V \times V \times V \times \mathbb{N}^3 \times V \times V$ where $V = \{ \bot \}$ with order $=$ is the void poset. Its only configuration is the following:
\[
h = (\bot, \bot, \bot, (4, 300, 32), \bot, \bot)
\]
The software scaler has configuration space $V \times V \times V \times \mathbb{N} \times V \times V$ and also has a single configuration:
\[
s = (\bot, \bot, \bot, 100, \bot, \bot)
\]
To model the choice between these two components, we need to normalize them to a single configuration space.
We choose the space $V \times V \times V \times (\mathbb{N} \cup \{\top\})^3 \times V \times V$.
The special $\top$ element is larger than any element of $\mathbb{N}$ and thus models an unlimited provided budget.

Normalization of the provided budget of the hardware scaler is now modeled by the function $f_p^1 : \mathbb{N}^3 \to (\mathbb{N} \cup \{\top\})^3$ defined as $f_p^1(x,y,z) = (x,y,z)$ (i.e., it need not be changed).
Normalization of the provided budget of the software scaler is modeled by the function $f_p^2 : \mathbb{N} \to (\mathbb{N} \cup \{\top\})^3$ defined as $f_p^2(y) = (\top,y,\top)$ (the unlimited budgets are added).
The normalization functions for all parts but the provided budget are just copy functions $\mathit{copy}_i$ that copy dimension $i$ (see below for a precise definition). These copy functions trivially are $\preceq$-derivations.
Together, this gives the following instance of  Eq.~\ref{eq:alt}:
\[
\begin{array}{rl}
(1) & \mathit{min} \circ (\downarrow 1)^{6} \circ \mathit{copy}_6 \circ \mathit{copy}_5 \circ f^1_p \circ \mathit{copy}_3 \circ\mathit{copy}_2 \circ\mathit{copy}_1 \\
(2) & \mathit{min} \circ (\downarrow 1)^{6} \circ \mathit{copy}_6 \circ\mathit{copy}_5 \circ f^2_p \circ\mathit{copy}_3 \circ \mathit{copy}_2 \circ \mathit{copy}_1 \\
(3) & \mathit{min} \circ \cup
\end{array}
\]
\end{exa}
After application of the alternatives operation to $\{h\}$ and $\{s\}$, we obtain a general HWorSWscaler interface for the execution platform with two 
configurations:

\[
\{ (\bot, \bot, \bot, (4, 300, 32), \bot, \bot), (\bot, \bot, \bot, (\top, 100, \top), \bot, \bot) \}
\]

\subsection{Aggregation}
As a first example of an aggregation, we show a free aggregation without any i/o or budget matching. This can, for example, be used to compose a platform from Fiber and HW scaler components.
Some simple and generally useful $\preceq$-derivations that can be used to create the functions $f_i$, $f_o$, etcetera are defined next.
First, a grouping derivation can be defined as follows that puts dimensions $i$ to $j$ in a single new multi-dimensional dimension. This new dimension uses the element-wise $\preceq$ relation of its subdimensions (i.e., the dominance relation).

\begin{defi}[Grouping]\label{def:group}
Let $\mathcal{S} = Q_1 \times \ldots \times Q_n$ be a configuration space and let $1 \le i  \le j \le n$.
We define $\mathit{group}_{i,j} : \mathcal{S} \to Q$ with  $Q = Q_i \times \ldots \times Q_j$ and $(p_i, \ldots, p_j) \preceq_Q (q_i, \ldots, q_j)$ if and only if $p_k \preceq_{Q_k} q_k$ for all $i \le k \le j$ as follows:
$\mathit{group}_{i,j}(q_1, \ldots, q_n) = (q_i, \ldots, q_j)$.
\end{defi}
Note that $\mathit{group}_{i,i}$ just copies a dimension. We abbreviate $\mathit{group_{i,i}}$ by $\mathit{copy}_i$.

\begin{prop}
Grouping is an $\preceq$-derivation. 
\end{prop}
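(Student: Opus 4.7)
The plan is to unfold the definitions and observe that the claim is essentially immediate. Specifically, I need to show that if $c \preceq c'$ in $\mathcal{S} = Q_1 \times \ldots \times Q_n$, then $\mathit{group}_{i,j}(c) \preceq_Q \mathit{group}_{i,j}(c')$ in $Q = Q_i \times \ldots \times Q_j$.

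First I would fix arbitrary configurations $c = (p_1,\ldots,p_n)$ and $c' = (q_1,\ldots,q_n)$ in $\mathcal{S}$ and assume $c \preceq c'$. By the dominance relation of Def.~\ref{def:dominance}, this assumption unfolds to $p_k \preceq_{Q_k} q_k$ for every $1 \le k \le n$.

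Next I would apply the definition of grouping to both configurations, obtaining $\mathit{group}_{i,j}(c) = (p_i,\ldots,p_j)$ and $\mathit{group}_{i,j}(c') = (q_i,\ldots,q_j)$. Since the componentwise inequalities $p_k \preceq_{Q_k} q_k$ hold in particular for the sub-range $i \le k \le j$, the very definition of $\preceq_Q$ given in Def.~\ref{def:group} (element-wise ordering on $Q_i \times \ldots \times Q_j$) immediately yields $(p_i,\ldots,p_j) \preceq_Q (q_i,\ldots,q_j)$, which is the required conclusion.

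There is no real obstacle here; the statement is a one-line consequence of the fact that the order on the grouped range is defined to be the restriction of the coordinate-wise order on the ambient product poset. The only subtlety worth noting (but not a genuine difficulty) is that one must be careful that $Q$ carries the product order from Def.~\ref{def:group}, not some other order, so that dominance on the ambient space restricts cleanly to dominance on the grouped tuple. Once this is observed the proof is essentially a rewriting of definitions.
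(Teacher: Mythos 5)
Your proof is correct and matches the paper's approach: the paper simply states that the result follows directly from the definition of a $\preceq$-derivation, and your argument is the straightforward unfolding of that definition together with Def.~\ref{def:dominance} and Def.~\ref{def:group}. Nothing is missing; you have merely written out explicitly what the paper leaves as immediate.
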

\begin{proof}Directly from Def.\ \ref{def:derivations}.\end{proof}

Next, ungrouping can help to access parts of multi-dimensional posets.

\begin{defi}[Ungrouping]
Let $\mathcal{S} = Q_1 \times \ldots \times Q_n$ be a configuration space, let $1 \le i \le n$, let $Q_i = Q_i^1 \times \ldots \times Q_i^m$ and let $1 \le j \le m$.
Furthermore, let $(p_i^1, \ldots, p_i^m) \preceq_{Q_i} (q_i^1, \ldots, q_i^m)$ if and only if $p_i^k \preceq q_i^k$ for all $1 \le k \le m$.
We define $\mathit{ungroup}_{i,j} : \mathcal{S} \to Q_i^j$  as  $\mathit{ungroup}_{i,j}(q_1, \ldots, q_{i-1},  (q_i^1, \ldots, q_i^m), q_{i+1}, \ldots, q_n) = q_i^j$.
\end{defi}
\begin{prop}
Ungrouping is an $\preceq$-derivation. 
\end{prop}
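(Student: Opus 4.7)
The plan is to prove this directly from the definitions of dominance (Def.~\ref{def:dominance}), of the ordering on $Q_i$ given in the definition of ungrouping, and of $\preceq$-derivation (Def.~\ref{def:derivations}). There is no real obstacle here; the result is essentially a projection of an already coordinate-wise inequality onto a single subcoordinate.

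Concretely, I would take arbitrary $c, c' \in \mathcal{S}$ with $c \preceq c'$, write $c = (q_1, \ldots, q_{i-1}, (q_i^1, \ldots, q_i^m), q_{i+1}, \ldots, q_n)$ and $c' = (q_1', \ldots, q_{i-1}', (q_i'^1, \ldots, q_i'^m), q_{i+1}', \ldots, q_n')$, and unfold Def.~\ref{def:dominance} to obtain $q_k \preceq_{Q_k} q_k'$ for every $1 \le k \le n$. In particular, for $k = i$ we get $(q_i^1, \ldots, q_i^m) \preceq_{Q_i} (q_i'^1, \ldots, q_i'^m)$. By the ordering on $Q_i$ stipulated in the statement of the definition, this means $q_i^k \preceq q_i'^k$ for all $1 \le k \le m$, and specialising to $k = j$ yields $q_i^j \preceq q_i'^j$. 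Since $\mathit{ungroup}_{i,j}(c) = q_i^j$ and $\mathit{ungroup}_{i,j}(c') = q_i'^j$, this is precisely $\mathit{ungroup}_{i,j}(c) \preceq_{Q_i^j} \mathit{ungroup}_{i,j}(c')$, which is the condition in Def.~\ref{def:derivations} for $\mathit{ungroup}_{i,j}$ to be an $\preceq$-derivation. The proof is thus just a chain of definition unfoldings, parallel in spirit to the proof given for grouping.
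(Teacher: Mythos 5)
Your proof is correct and matches the paper's approach: the paper simply says ``Directly from Def.~\ref{def:derivations}'', and your argument is exactly the definition-unfolding that this one-liner leaves implicit (dominance gives $q_i \preceq_{Q_i} q_i'$, the element-wise order on $Q_i$ gives component-wise inequalities, and projecting to component $j$ gives the claim). Nothing is missing.
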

\begin{proof}Directly from Def.\ \ref{def:derivations}.\end{proof}

Finally, the void derivation adds a void value that can be used, e.g., to model empty component parts.

\begin{defi}[Voiding]
Let $\mathcal{S}$ be a configuration space.
We define $\mathit{void} : \mathcal{S} \to \{ \bot \}$ as  $\mathit{void}(c) = \bot$, with $=$ as the order on $\{ \bot \}$.
\end{defi}
\begin{prop}
Voiding is an $\preceq$-derivation.
\end{prop}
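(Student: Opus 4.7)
The plan is to apply Definition~\ref{def:derivations} directly and observe that the conclusion becomes trivial because the codomain $\{\bot\}$ is a singleton ordered by equality. In particular, to show that $\mathit{void}$ is an $\preceq$-derivation I must verify that $c \preceq c'$ implies $\mathit{void}(c) \preceq_{\{\bot\}} \mathit{void}(c')$ for all $c, c' \in \mathcal{S}$.

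The first step is to fix arbitrary $c, c' \in \mathcal{S}$ with $c \preceq c'$; the hypothesis on $c, c'$ is not even needed in what follows, but it sets up the definition. Next I unfold the definition of $\mathit{void}$ to compute $\mathit{void}(c) = \bot$ and $\mathit{void}(c') = \bot$. Since the partial order on $\{\bot\}$ is $=$, and $\bot = \bot$, I conclude $\mathit{void}(c) \preceq_{\{\bot\}} \mathit{void}(c')$ as required.

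There is essentially no obstacle here: the proof is a one-line unfolding, entirely analogous to the proofs just given for grouping and ungrouping. The only thing worth being careful about is recording explicitly that $\{\bot\}$ with equality is a (trivial) poset and that equality is reflexive, so that $\bot \preceq_{\{\bot\}} \bot$ holds. The claim then follows immediately from Definition~\ref{def:derivations}.
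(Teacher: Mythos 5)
Your proof is correct and matches the paper's approach: the paper simply says the claim follows directly from Def.~\ref{def:derivations}, and your argument is exactly the one-line unfolding (both images equal $\bot$, and $=$ on $\{\bot\}$ is reflexive) that justifies that remark.
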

\begin{proof}Directly from Def.\ \ref{def:derivations}.\end{proof}
The next example shows how we apply all this for a simple aggregation.

\begin{exa}[Free aggregation]\label{ex:comp1}
In this example, we aggregate a Fiber with a HW scaler, both introduced in Sec.~\ref{sec:ex}.
Since there is no connection made between input and output, or between provided and required budget, we call this a \emph{free aggregation}.
We assume that we have a single Fiber that provides 10~Gb/s bandwidth and a single HW scaler that provides a scalers budget of four streams, 300~Mpixels/s processing, and 32 memory segments of 128 pixels each.
Both components have a single configuration, and only a non-void provided budget:
\[
\begin{array}{rccccccl}
f_1 = (& \bot, & \bot, & \bot, & 10, & \bot, & \bot &) \\
h_1 =(& \bot, & \bot, & \bot, & (4, 300, 32), & \bot, & \bot &)
\end{array}
\]
We then apply the free product to obtain:
\[
\{f_1\} \times \{h_1\} = \{ ( \overbrace{\bot, \bot, \bot, 10, \bot, \bot}^{f_1}, \overbrace{\bot, \bot, \bot, (4, 300, 32), \bot, \bot}^{h_1} ) \}
\]

Next, we define the compositions for the six interface parts. The composition of input, output and required budget is modeled by the void derivation, because the Fiber and HW scaler do not have these parts.
The composition of the provided budget is done by grouping the respective provided budgets. Finally, we again use the void derivation for the composition of quality and parameters.
This gives the following configuration:
\[
( \overbrace{\bot, \bot, \bot, 10, \bot, \bot}^{f_1},
\overbrace{\bot, \bot, \bot, (4, 300, 32), \bot, \bot}^{h_1},
\bot, \bot, \bot, (10, (4, 300, 32)), \bot, \bot
 )
\]

The final steps are to abstract from the individual component parts, i.e., the first twelve dimensions, and to minimize (which is trivial for a configuration set of size one). This yields the execution platform component with only a single configuration.
\[
e_1 =  (\bot, \bot, \bot, (10, (4, 300, 32)), \bot \bot )
\]
This aggregation can thus be written as the following instance of Eq.~\ref{eq:aggr} (where constraints are not present, as mentioned before):
\[
\mathit{min} \circ (\downarrow 1)^{12} \circ \mathit{void} \circ \mathit{void} \circ\mathit{group}_{4,10} \circ \mathit{void} \circ \mathit{void} \circ \mathit{void} \circ \times
\]
\end{exa}

Our next example shows a more complicated aggregation. We want to make sure that input and output match. We do this by application of a producer-consumer constraint, which is already introduced in ~\cite{pareto-algebra}.

\begin{defi}[Producer-consumer constraint]\label{def:pc}
Let $\mathcal{S} = Q_1 \times Q_2 \times \ldots \times Q_n$ be a configuration space, let $1 \le p \neq c \le n$ and let $f : Q_c \to Q_p$ be a $\succeq$-derivation. Then, $D_{p,c,f} = \{ (q_1, q_2, \ldots, q_n) \in \mathcal{S} \,|\, f(q_c) \preceq_{Q_p} q_p \}$ is a producer-consumer constraint.
\end{defi}
\begin{prop}
A producer-consumer constraint is safe.
\end{prop}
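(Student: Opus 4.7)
The plan is to unfold both definitions and then chain three instances of $\preceq_{Q_p}$ using transitivity. First I would fix arbitrary $c_1 = (q_1,\ldots,q_n)$ and $c_2 = (q_1',\ldots,q_n')$ in $\mathcal{S}$ with $c_2 \preceq c_1$ and $c_2 \in D_{p,c,f}$, and aim to show $c_1 \in D_{p,c,f}$, i.e., $f(q_c) \preceq_{Q_p} q_p$, which is exactly what Def.~\ref{def:constraint} requires for safeness.

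Next, from $c_2 \in D_{p,c,f}$ I get $f(q_c') \preceq_{Q_p} q_p'$ by definition of the producer-consumer constraint. From $c_2 \preceq c_1$ together with Def.~\ref{def:dominance}, I extract the two coordinate-wise inequalities that matter here: $q_c' \preceq_{Q_c} q_c$ and $q_p' \preceq_{Q_p} q_p$.

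The key step is applying the hypothesis that $f$ is a $\succeq$-derivation (Def.~\ref{def:derivations}): from $q_c' \preceq_{Q_c} q_c$ this yields $f(q_c) \preceq_{Q_p} f(q_c')$. Note this is exactly where the direction-reversal in the definition of producer-consumer constraints (requiring $f$ to be decreasing rather than increasing) becomes essential; without it the inequality would go the wrong way and the argument would break.

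Finally, I chain the three inequalities by transitivity of $\preceq_{Q_p}$:
\[
f(q_c) \;\preceq_{Q_p}\; f(q_c') \;\preceq_{Q_p}\; q_p' \;\preceq_{Q_p}\; q_p,
\]
so $f(q_c) \preceq_{Q_p} q_p$ and therefore $c_1 \in D_{p,c,f}$. I do not expect any real obstacle; the only subtle point is making sure the $\succeq$-derivation hypothesis is used with the correct orientation, since that is the whole reason the definition insists on $f$ being $\succeq$ rather than $\preceq$.
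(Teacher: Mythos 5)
Your proof is correct and follows essentially the same route as the paper's: unfold the definition of safeness, extract the coordinate-wise dominances, apply the $\succeq$-derivation property of $f$ to reverse the inequality on the consumer coordinate, and chain three inequalities by transitivity of $\preceq_{Q_p}$. The only difference is notational (you prime the dominated configuration where the paper primes the dominating one), and your remark about why $f$ must be a decreasing derivation is exactly the point of the construction.
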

\begin{proof}
Let $d = (q_1, \ldots, q_n), d'= (q_1', \ldots, q_n') \in \mathcal{S}$ and let $d \preceq d'$. Let us assume that $d \in D_{p,c,f}$. We have to show that $d' \in D_{p,c,f}$, which is to say that $f(q_c') \preceq_{Q_p} q_p'$.
We have that $f(q_c) \preceq_{Q_p} q_p$ and since $d \preceq d'$ also that $q_c \preceq q_c'$ and $q_p \preceq q_p'$.
Since $f$ is a $\succeq$-derivation, we have that $f(q_c) \succeq_{Q_p} f(q_c')$.
Thus, $f(q_c') \preceq_{Q_p} f(q_c) \preceq_{Q_p} q_p \preceq_{Q_p} q_p'$.
\end{proof}

Another generally applicable constraint lets us select an arbitrary subset of configurations based on an unordered dimension.
This is typically used to select arbitrary parameter valuations.

\begin{defi}[Subset constraint]
Let $\mathcal{S}= Q_1 \times Q_2 \times \ldots \times Q_n$ be a configuration space,  let $1 \le i \le n$, let $=$ be the partial order for $Q_i$, and let $X \subseteq Q_i$.
The constraint $D_{i,X} = \left\{ (q_1, q_2, \ldots, q_n) \in \mathcal{S} \,|\,  q_i \in X \right\}$ is called a subset constraint.
\end{defi}
\begin{prop}
A subset constraint is safe.
\end{prop}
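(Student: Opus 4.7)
The plan is to apply Definition~\ref{def:constraint} directly and exploit the single hypothesis that distinguishes a subset constraint from an arbitrary subset: namely, that the partial order on the selecting dimension $Q_i$ is equality. I would fix arbitrary $c_1 = (q_1^1, \ldots, q_n^1)$ and $c_2 = (q_1^2, \ldots, q_n^2)$ in $\mathcal{S}$, assume $c_2 \preceq c_1$ and $c_2 \in D_{i,X}$, and aim to conclude $c_1 \in D_{i,X}$, i.e., $q_i^1 \in X$.

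The argument then has essentially one step. By Def.~\ref{def:dominance}, $c_2 \preceq c_1$ entails in particular that $q_i^2 \preceq_{Q_i} q_i^1$. Since $\preceq_{Q_i}$ is equality, this forces $q_i^2 = q_i^1$. From $c_2 \in D_{i,X}$ we have $q_i^2 \in X$ by definition of the subset constraint, so $q_i^1 = q_i^2 \in X$, and therefore $c_1 \in D_{i,X}$, as required.

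There is no real obstacle here; the proof mirrors in structure the one already given for the producer-consumer constraint but is even shorter. The only point worth flagging in writing is why the hypothesis on $Q_i$ being equality-ordered is essential: if $\preceq_{Q_i}$ were a nontrivial partial order, a dominating $q_i^1$ could strictly exceed $q_i^2 \in X$ without itself being in $X$, and safety would fail for any $X$ that is not upward closed in $Q_i$. Restricting to unordered (equality) dimensions sidesteps this by collapsing $q_i^2 \preceq_{Q_i} q_i^1$ to equality, which is what makes the trivial subset $X$ automatically upward closed in the ambient dominance order.
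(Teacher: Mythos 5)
Your proof is correct and follows exactly the same route as the paper's: from $c_2 \preceq c_1$ extract $q_i^2 \preceq_{Q_i} q_i^1$, use the hypothesis that $\preceq_{Q_i}$ is equality to get $q_i^1 = q_i^2 \in X$, and conclude $c_1 \in D_{i,X}$. Your added remark on why the equality ordering is essential is a nice observation but not a deviation in method.
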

\begin{proof}
Let $c_1=(p_1,\ldots,p_n), c_2=(q_1,\ldots,q_n) \in \mathcal{S}$, let $c_2 \preceq c_1$, and let $c_2 \in D_{i,X}$.
We have that $q_i \preceq p_i$ for all $1 \le i \le n$.
We assumed that $c_2 \in D_{i,X}$ and thus $q_i \in X$.
Because of the assumption on $Q_i$, we thus have that $p_i = q_i$, and, therefore, that $p_i \in X$.
Thus, $(p_1,\ldots,p_n) \in D_{i,X}$.
\end{proof}

\begin{exa}[Horizontal aggregation]\label{ex:comp2}
In this example, we aggregate a Transport with a Scaler.
The key difference with the previous example is that we have to match the output of the Transport with the input of the Scaler. We call this a horizontal aggregation.
In this example, we consider a Transport with two configurations: one for 60Hz FHD input, and one for 60Hz HD input.
The required budget refers to the needed connection bandwidth in~Gb/s.
\[
\begin{array}{rccccccl}
t_1 = (& \mathrm{HD@60}, & \mathrm{HD@60}, & 2, & \bot, & \bot &  \mathrm{HD@60} &)\\
t_2 = (& \mathrm{FHD@60}, & \mathrm{FHD@60}, & 4, & \bot, & \bot &  \mathrm{FHD@60} &)\\
\end{array}
\]
For the Scaler, we look at the following configurations:
\[
\begin{array}{rccccccl}
s_1 = \big(& \mathrm{FHD@60}, & \mathrm{{HD+}@60}, & (201,15), & \bot, & 60, & \mathrm{HD+} & \big) \\
s_2 = \big(& \mathrm{FHD@60}, & \mathrm{{HD}@30}, & (145,15), & \bot, & 30, & \mathrm{HD} &\big) \\
s_3 = \big(& \mathrm{{HD+}@60}, & \mathrm{HD@60}, & (135,13), & \bot, & 60, & \mathrm{HD} &\big) \\
\end{array}
\]

Taking the free component product results in a configuration set with six configurations (the left-hand side of these configurations is the Transport component):
\[
\begin{array}{ll}
a_1 =  (\mathrm{HD@60}, \mathrm{HD@60}, 2, \bot, \bot, \mathrm{HD@60}, & \mathrm{FHD@60}, \mathrm{{HD+}@60}, (201,15), \bot, 60, \mathrm{HD+}) \\
a_2 =  (\mathrm{HD@60}, \mathrm{HD@60}, 2, \bot, \bot, \mathrm{HD@60}, & \mathrm{FHD@60}, \mathrm{{HD}@30}, (145,15), \bot, 30, \mathrm{HD}) \\
a_3 =  (\mathrm{HD@60}, \mathrm{HD@60}, 2, \bot, \bot, \mathrm{HD@60}, & \mathrm{{HD+}@60}, \mathrm{HD@60}, (135,13), \bot, 60, \mathrm{HD}) \\
a_4 =  (\mathrm{FHD@60}, \mathrm{FHD@60}, 4, \bot, \bot, \mathrm{FHD@60}, & \mathrm{FHD@60}, \mathrm{{HD+}@60}, (201,15), \bot, 60, \mathrm{HD+}) \\
a_5 =  (\mathrm{FHD@60}, \mathrm{FHD@60}, 4, \bot, \bot, \mathrm{FHD@60}, & \mathrm{FHD@60}, \mathrm{{HD}@30}, (145,15), \bot, 30, \mathrm{HD}) \\
a_6 =  (\mathrm{FHD@60}, \mathrm{FHD@60}, 4, \bot, \bot, \mathrm{FHD@60}, & \mathrm{{HD+}@60}, \mathrm{HD@60}, (135,13), \bot, 60, \mathrm{HD}) \\
\end{array}
\]

The stream that needs to be processed is of type FHD@60, and we use a subset constraint on the Transport parameter, $D_{6, \{ \mathrm{FHD@60} \}}$, to select only the relevant transport configurations, which are $a_4$, $a_5$ and $a_6$.
We also need to match the output of the Transport to the input of the Scaler.
Since the output type of the Transport is equal to the input type of the Scaler, we use the identity function $\mathbf{id}$ (which clearly is a $\succeq$-derivation) to create a producer-consumer constraint $D_{2,7,\mathbf{id}}$.
Then we see that $a_6 \not\in D_{2,7,\mathbf{id}}$, since output FHD@60 does not match input HD+@60 (with $=$ as the ordering).
This removes $a_6$.
Then we also apply a parameter constraint, $D_{12, \{ \mathrm{HD} \}}$, to select only those configurations with an HD output. That removes $a_4$ since it has HD+ output.

The derivations for the parts are the following.
We use $\mathit{copy}_1$ for the input, since the input of the aggregation is the input of the Transport.
We use $\mathit{copy}_8$ for the output, since the output of the aggregation is the output of the Scaler.
This models that the output of the transport is  consumed completely by the Scaler.
We use $\mathit{group}_{3, 9}$ for the required budget (collecting the required budgets of the two constituent interfaces), $\mathit{void}$ for the provided budget,
$\mathit{copy}_{11}$ for the quality (indicating that the aggregation inherits its quality from the Scaler) and $\mathit{void}$ for the parameters.
This last voiding models abstraction of the Transport and Scaler parameters as they have served their purpose and are not relevant anymore.
Alternatively, one could decide that the aggregation copies the parameters from the Scaler, to allow configuration of the output of streams as a whole.
Finally, we abstract from the first twelve dimensions which contain the individual component parts and minimize.
This gives the following configuration:
\[
a_5 =  (\mathrm{FHD@60}, \mathrm{HD@30}, (4, (145,15)), \bot, 30, \bot)
\]
This aggregation formally is the following instance of Eq.~\ref{eq:aggr}:
\[
	\begin{array}{l}
\mathit{min} \circ (\downarrow 1)^{12} \circ \mathit{void} \circ \mathit{copy}_{11} \circ \mathit{void} \circ \mathit{group}_{3,9} \circ \mathit{copy}_8 \circ \mathit{copy}_1 \circ{} \\
\hspace*{5cm} \cap D_{12, \{ \mathrm{HD} \}} \circ \cap D_{6, \{ \mathrm{FHD@60} \}} \circ \cap D_{2,7,\mathbf{id}} \circ \times
	\end{array}
\]
\end{exa}

This example shows how we can use parameters to select subsets of configurations, and how to use a producer-consumer constraint for i/o matching (budget matching works in a similar way).
The output in the example is completely consumed. This, however, needs not be the case in general. Modeling broadcast, for instance, leaves the broadcasting output untouched in an aggregation.
In general, we need non-trivial $\preceq$-derivations for the aggregation of the parts, for instance, to calculate the remaining budget when a required budget is strictly smaller than a provided budget. The remaining budget is then provided to other components by the aggregation.
Another example is the aggregation of two image-processing components that both have latency as quality. For the aggregation, we want to add the latencies of the individual components.
Real numbers can often be used to express inputs, outputs, budgets and qualities. The kind of arithmetic operations that we need for the proper $\preceq$-derivations are supported by the following definitions.

\begin{defi}[Addition-Multiplication-Min-Max]\label{def:add}
Let $\mathcal{S} = Q_1 \times \ldots \times Q_n$ be a configuration space, let $1 \le i,j \le n$, let $Q_i = \mathbb{R}$ and $Q_j = \mathbb{R}$, and let $\preceq_{Q_i} \, =\, \le$ and $\preceq_{Q_j} \,=\, \le$.
We define $\mathit{add}_{i,j} : \mathcal{S} \to Q$ with  $Q = \mathbb{R}$ and $\preceq_Q \,=\, \le$ as $\mathit{add}_{i,j}(q_1, \ldots, q_n) = q_i + q_j$.
We define $\mathit{mult}$, $\mathit{min}$ and $\mathit{max}$ in a similar way.
\end{defi}
\begin{prop}
Addition/multiplication/min/max are $\preceq$-derivations.
\end{prop}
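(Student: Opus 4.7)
The plan is to unpack Def.~\ref{def:derivations} for each operator and verify monotonicity at the two distinguished coordinates $i$ and $j$. Concretely, I take arbitrary $c=(q_1,\ldots,q_n),\ c'=(q_1',\ldots,q_n')\in\mathcal{S}$ with $c\preceq c'$. By Def.~\ref{def:dominance} this means $q_k\preceq_{Q_k}q_k'$ for every $k$, so in particular $q_i\le q_i'$ and $q_j\le q_j'$. It then suffices to show $\mathit{op}_{i,j}(c)\le \mathit{op}_{i,j}(c')$ for each of the four operators on the codomain $\mathbb{R}$ ordered by $\le$.

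For addition, adding the two scalar inequalities immediately yields $q_i+q_j\le q_i'+q_j'$. For $\mathit{min}$ and $\mathit{max}$ it is a standard one-line check: case-split on which argument attains the minimum (resp.\ maximum), or simply invoke the well-known fact that $\min$ and $\max$ on a totally ordered set are monotone in each argument. These three cases are completely routine and follow directly from elementary properties of $\le$ on $\mathbb{R}$.

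The only subtle case is multiplication: on all of $\mathbb{R}$ the product is \emph{not} monotone in both arguments simultaneously (for instance $-2\le -1$ yet $(-2)\cdot(-2)=4>1=(-1)\cdot(-1)$), so the proposition as literally stated in Def.~\ref{def:add} fails for $\mathit{mult}$ without further hypothesis. I would resolve this in one of two ways: either restrict $\mathit{mult}_{i,j}$ to the case $Q_i=Q_j=\mathbb{R}_{\ge 0}$ (matching the intended use for budgets, rates, and latencies), or assume separately that the values at coordinates $i$ and $j$ are nonnegative on the configurations of interest. Under nonnegativity the proof is again one line: $q_i q_j\le q_i' q_j\le q_i' q_j'$, where the first inequality uses $q_i\le q_i'$ together with $q_j\ge 0$ and the second uses $q_j\le q_j'$ together with $q_i'\ge 0$. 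This is the only place where anything non-trivial happens; the rest is immediate from the definition of $\preceq$-derivation.
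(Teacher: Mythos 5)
Your proof follows the same route as the paper's: unpack Def.~\ref{def:derivations} and Def.~\ref{def:dominance}, reduce to coordinatewise monotonicity of the operator at positions $i$ and $j$, and check each case. For addition, min, and max your argument matches the paper's, which simply asserts that the conclusion ``follows straightforwardly from the assumption that $q_i \le q_i'$ and $q_j \le q_j'$.'' Where you go beyond the paper is in flagging multiplication: you are right that on all of $\mathbb{R}$ the product is not monotone in both arguments (your counterexample $(-2)\cdot(-2)=4 > 1 =(-1)\cdot(-1)$ is valid), so the proposition as literally stated for $\mathit{mult}$ with $Q_i=Q_j=\mathbb{R}$ is false, and the paper's one-line proof silently glosses over this. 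Your proposed repair --- restricting $\mathit{mult}_{i,j}$ to $\mathbb{R}^{\ge 0}$ and using the two-step chain $q_i q_j \le q_i' q_j \le q_i' q_j'$ --- is the natural fix, and it is consistent with the paper's own treatment of division in Def.~\ref{def:subdiv}, where the authors do restrict to $\mathbb{R}^{\ge 0}$; the same restriction should have been imposed on multiplication. In short: your proof is correct where the paper's is, and more careful where the paper's is not.
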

\begin{proof}
Let $(q_1, \ldots, q_n),  (q_1', \ldots, q_n')\in \mathcal{S}$ and let $(q_1, \ldots, q_n) \preceq (q_1', \ldots, q_n')$.
We have to show that $f(q_i, q_j) \le f(q_i', q_j')$ with $f$ being addition, multiplication, min or max.
This follows straightforwardly from the assumption that $q_i \le q_i'$ and $q_j \le q_j'$.
\end{proof}

For subtraction and division, the situation is subtly different as the ordering on one of the posets needs to be reversed.
Subtraction can be used, for instance, to compute the remaining provided budget when a budget provider is aggregated with a budget consumer. This, however, nicely matches with the alternating interpretation of input and output, and provided budget and required budget as we will see in the example below.
In the following definition, $Q_p$ is typically the producer (output or provided budget), and $Q_c$ is the consumer (input or required budget).

\begin{defi}[Subtraction-Division]\label{def:subdiv}
Let $\mathcal{S} = Q_1 \times \ldots \times Q_n$ be a configuration space, let $1 \le p \neq c \le n$, let $Q_p = Q_c = \mathbb{R}$, and let $\preceq_{Q_p} \,=\, \le$ and $\preceq_{Q_c} \,=\, \ge$.
We define $\mathit{sub}_{p,c} : \mathcal{S} \to Q$ with  $Q = \mathbb{R}$ and $\preceq_Q \,=\, \le$ as $\mathit{sub}_{p,c}(q_1, \ldots, q_n) = q_p - q_c$.
We define $\mathit{div}$ in a similar way, but only for $Q_p = Q_c = \mathbb{R}^{\ge 0}$.
\end{defi}
\begin{prop}
Subtraction/division are $\preceq$-derivations.
\end{prop}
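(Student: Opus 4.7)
The plan is straightforward: unfold the definition of $\preceq$-derivation from Def.~\ref{def:derivations}, use the hypothesis on coordinate-wise ordering, and exploit the fact that $Q_c$ carries the \emph{reversed} order $\ge$. The key observation is that this alternating choice of orderings is exactly what is needed to make $\mathit{sub}$ and $\mathit{div}$ monotone in the standard $\le$-sense on the target poset $Q$.

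First I would consider subtraction. Suppose $(q_1,\ldots,q_n) \preceq (q_1',\ldots,q_n')$ in $\mathcal{S}$. By Def.~\ref{def:dominance} this means $q_i \preceq_{Q_i} q_i'$ for every $i$. In particular, since $\preceq_{Q_p} \,=\, \le$, we have $q_p \le q_p'$, and since $\preceq_{Q_c} \,=\, \ge$, we have $q_c \ge q_c'$, i.e.\ $-q_c \le -q_c'$. Adding the two inequalities in $\mathbb{R}$ gives $q_p - q_c \le q_p' - q_c'$, which is precisely $\mathit{sub}_{p,c}(q_1,\ldots,q_n) \preceq_Q \mathit{sub}_{p,c}(q_1',\ldots,q_n')$. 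This is exactly what Def.~\ref{def:derivations} requires for $\mathit{sub}_{p,c}$ to be an $\preceq$-derivation.

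Next I would handle division on $\mathbb{R}^{\ge 0}$ by the same two-step monotonicity argument, with multiplication replacing addition. Under the same hypothesis, we have $q_p \le q_p'$ (both nonnegative) and $q_c \ge q_c' \ge 0$. So $q_p/q_c \le q_p'/q_c$ (multiplying the first inequality by the positive scalar $1/q_c$) and $q_p'/q_c \le q_p'/q_c'$ (since $1/q_c \le 1/q_c'$ and $q_p'\ge 0$). Chaining these gives $q_p/q_c \le q_p'/q_c'$, as required.

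The only subtle point, and the place I would be most careful, is the division-by-zero case: if $q_c = 0$, the factor $1/q_c$ is undefined. I would assume, consistent with the way $\mathit{div}$ is used in the paper for computing remaining budgets, that the definition of $\mathit{div}_{p,c}$ is implicitly restricted to configurations with $q_c > 0$ (or that $\mathbb{R}^{\ge 0}$ in Def.~\ref{def:subdiv} should be read as $\mathbb{R}^{> 0}$ on the $Q_c$ coordinate). Otherwise no genuinely new idea is needed beyond what the subtraction proof already uses.
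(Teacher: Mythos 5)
Your proof is correct and follows essentially the same route as the paper: unfold dominance coordinate-wise, use that $\preceq_{Q_p}$ is $\le$ while $\preceq_{Q_c}$ is $\ge$, and combine the two inequalities to get monotonicity of $\mathit{sub}$ and $\mathit{div}$. You are in fact slightly more careful than the paper on the division case, both in spelling out the two-step chaining and in flagging the $q_c=0$ issue (the paper's Def.~\ref{def:subdiv} only restricts to $\mathbb{R}^{\ge 0}$ and its proof silently assumes the quotient is well defined), but this does not change the argument.
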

\begin{proof}
Let $(q_1, \ldots, q_n),  (q_1', \ldots, q_n')\in \mathcal{S}$ and let $(q_1, \ldots, q_n) \preceq (q_1', \ldots, q_n')$.
We have to show that $f(q_p, q_c) \le f(q_p', q_c')$ with $f$ being subtraction or division.
We have assumed that $q_p \preceq q_p'$, which means that $q_p \le q_p'$.
We also have assumed that $q_c \preceq q_c'$, which means that $q_c \ge q_c'$.
Clearly, $q_p - q_c \le q_p' - q_c'$, and (assuming non-negative numbers) $q_p / q_c \le q_p' / q_c'$.
\end{proof}

\begin{exa}[Vertical aggregation]\label{ex:comp3}
In this example, we aggregate a Fiber component that provides 10~Gb/s \emph{bandwidth} and a Connection component that requires $4$~Gb/s \emph{bandwidth} and provides $4$~Gb/s \emph{connection bandwidth}.
We connect the provided budget of the Fiber with the required budget of the Connection and therefore call this a vertical aggregation.
\[
\{ f_1 \} = \{ ( \bot, \bot, \bot, 10, \bot, \bot) \}
\]
\[
\{ c_1 \} = \{ ( \bot, \bot, 4, 4, \bot, \bot) \}
\]
\[
\{f_1\} \times \{c_1\} = \{ ( \bot, \bot, \bot, 10, \bot, \bot, \bot, \bot, 4, 4, \bot, \bot) \}
\]

Application of the budget constraint $D_{4,9,\mathbf{id}}$ does not filter out this configuration because $4 \le 10$.
Then, we apply $\mathit{void}$ three times to model the input and output derivation, and the derivation for the required budget.
We use $\mathit{sub}_{4,9}$ (on the natural numbers) to model the consumption of the budget and compute the remaining provided \emph{bandwidth} budget.
Note that the requirements that $Q_4 = Q_9 = \mathbb{N}$, $\preceq_{Q_4} \,=\,  \le$ and $\preceq_{Q_9} \,=\,  \ge$ are met because of the alternating approach to modeling provided and required budgets.
We group this remaining provided \emph{bandwidth} with the provided \emph{connection bandwidth} and remove the superfluous dimension.
Finally, we apply $\mathit{void}$ twice to model the composition of the quality and parameter parts.
As last steps, we abstract from the first twelve dimensions and minimize.
This gives the following configuration, capturing that the aggregate component provides a combination of 4 Gb/s \emph{connection bandwidth} and 6 Gb/s raw \emph{bandwidth}:
\[
( \bot, \bot, \bot, (4,6), \bot, \bot)
\]
This aggregation formally is the following instance of Eq.~\ref{eq:aggr}:

\[
\mathit{min} \circ (\downarrow 1)^{12} \circ\mathit{void}^2 \circ (  \downarrow 16 \circ \mathit{group}_{10,16} \circ \mathit{sub}_{4,9} ) \circ \mathit{void}^3 \circ \cap D_{4,9,\mathbf{id}} \circ \times
\]
\end{exa}

So far, we have given a number of operations that we can use for aggregation.
The free product combines components and creates all possible combinations of the individual configurations.
Abstraction can be used to remove dimensions that are not relevant anymore.
Function application of $\preceq$-derivations defined above gives us operations that can be used to model the aggregation of input, output, etcetera. Finally, the producer-consumer constraint can be used to match input with output or required budget with provided budget, and the subset constraint can be used to select an arbitrary subset of configurations based on an unordered dimension (e.g., the parameter dimension).
As shown in Ex.~\ref{ex:comp1}, Ex.~\ref{ex:comp2} and Ex.~\ref{ex:comp3}, aggregation functions can be constructed from these basic building blocks using the template of Eq.~\ref{eq:aggr}.

\subsection{Aggregation templates}\label{sec:aggr-patterns}
The examples in the previous section have shown three templates that can be used for aggregation: free aggregation, horizontal aggregation and vertical aggregation.
These templates can be realized by using the general aggregation pattern of Eq.~\ref{eq:aggr} with proper derivations to implement $f_i$ etc., and with constraints for i/o and budget matching, parameter selection, and domain-specific filtering using the quality dimension.
Next, we show how we can realize these templates in terms of generalized addition and subtraction operators.

Let us assume that we know how to add and subtract, i.e., we have functions $+$ and $-$ with signature $Q \times Q \to Q$ with the following properties for every poset $Q$:
\[
a \preceq_{Q} c \land b \preceq_{Q} d  \Rightarrow  a +_Q b \preceq_{Q} c +_Q d
\]
\[
a \preceq_{Q} c \land b \succeq_{Q} d  \Rightarrow  a -_Q b \preceq_{Q} c -_Q d
\]
These properties hold for the regular addition and subtraction of numbers.
These operations give rise to $\preceq$-derivations.
Poset addition, defined below, takes care of adding values from identical posets, and also of values from different posets by grouping them together in a tuple.

\begin{defi}[Poset addition]
Let $\mathcal{S} = Q_1 \times \ldots \times Q_n$ be a configuration space, and let $1 \le i,j \le n$.
We define $+_{i,j} : \mathcal{S} \to Q$ as follows:
 $+_{i,j} (q_1, \ldots, q_n) = q_i +_{Q_i} q_j$ with $\preceq_Q = \preceq_{Q_i}$ if $Q_i = Q_j$ and $\preceq_{Q_i} = \preceq_{Q_j}$
 and otherwise $+_{i,j} (q_1, \ldots, q_n)=(q_i, q_j)$ with $(q_i, q_j) \preceq_Q (q_i',q_j')$ if and only if  $q_i \preceq_{Q_i} q_i'$ and $q_j \preceq_{Q_j} q_j'$.
\end{defi}
\begin{prop}
Poset addition is an $\preceq$-derivation.
\end{prop}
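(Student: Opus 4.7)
The plan is to apply Def.~\ref{def:derivations} directly via a two-case split that mirrors the two branches in the definition of $+_{i,j}$. First I would fix arbitrary configurations $c = (q_1,\ldots,q_n)$ and $c' = (q_1',\ldots,q_n')$ in $\mathcal{S}$ with $c \preceq c'$. Unfolding the dominance relation on the configuration space (Def.~\ref{def:dominance}) yields in particular $q_i \preceq_{Q_i} q_i'$ and $q_j \preceq_{Q_j} q_j'$. From here the rest is just verifying $+_{i,j}(c) \preceq_Q +_{i,j}(c')$ in each of the two cases.

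In the case where $Q_i = Q_j$ with matching orders, the output is $q_i +_{Q_i} q_j$, and the order on $Q$ is $\preceq_{Q_i}$. I would invoke the assumed monotonicity property of poset addition stated immediately before the definition, namely $a \preceq_Q c \wedge b \preceq_Q d \Rightarrow a +_Q b \preceq_Q c +_Q d$, instantiated with $a = q_i$, $b = q_j$, $c = q_i'$, $d = q_j'$.

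In the remaining case, the output is the pair $(q_i,q_j)$ equipped with the componentwise order. Here I would simply appeal to the definition of that order: since $q_i \preceq_{Q_i} q_i'$ and $q_j \preceq_{Q_j} q_j'$, we have $(q_i,q_j) \preceq_Q (q_i',q_j')$ by definition.

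I do not anticipate any real obstacle; the argument is bookkeeping on the case analysis plus the monotonicity hypothesis assumed for $+$. The only minor subtlety to be careful about is keeping track of which partial order is in force on the target poset $Q$ in each branch, since the definition of $+_{i,j}$ specifies $\preceq_Q$ differently in the two cases.
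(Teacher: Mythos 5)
Your proposal is correct and follows essentially the same route as the paper's proof: the same two-case split mirroring the definition of $+_{i,j}$, invoking the assumed monotonicity of $+_{Q_i}$ in the first case and the componentwise order in the second. No gaps.
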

\begin{proof}
Let $(q_1, \ldots, q_n), (q_1',\ldots,q_n') \in \mathcal{S}$ and let $(q_1, \ldots, q_n) \preceq (q_1',\ldots,q_n') $.
First, assume that $Q_i = Q_j$ and $\preceq_{Q_i} = \preceq_{Q_j}$.
Then, $+_{i,j}(q_1, \ldots, q_n) = q_i +_{Q_i} q_j$ and $+_{i,j}(q_1', \ldots, q_n') = q_i' +_{Q_i} q_j'$.
From the assumption that $q_i \preceq_{Q_i} q_i'$ and $q_j \preceq_{Q_i} q_j'$, it follows with the definition of $+_{Q_i}$ that $q_i +_{Q_i} q_j \preceq_{Q_i} q_i' +_{Q_i} q_j'$.
Second,
assume that  $Q_i \neq Q_j$ or $\preceq_{Q_i} \neq \preceq_{Q_j}$.
Then $+_{i,j}(q_1, \ldots, q_n) = (q_i, q_j)$ and $+_{i,j}(q_1', \ldots, q_n') = (q_i',q_j')$.
Clearly, $(q_i, q_j) \preceq_Q (q_i',q_j')$ since $q_i \preceq_{Q_i} q_i'$ and $q_j \preceq_{Q_j} q_j'$.
\end{proof}

The case for subtraction is more complicated due to the alternating interpretation but similar to the situations of the producer-consumer constraint of Def.~\ref{def:pc} and the subtraction-division of Def.~\ref{def:subdiv}.

\begin{defi}[Poset subtraction]
Let $\mathcal{S} = Q_1 \times \ldots \times Q_n$ be a configuration space, let $1 \le i,j \le n$, let $Q_i = Q_j$, and let $\preceq_{Q_i} = \preceq_{Q_j}^{-1}$.
We define $-_{i,j} : \mathcal{S} \to Q$ as follows:
 $-_{i,j} (q_1, \ldots, q_n) = q_i -_{Q_i} q_j$ with $\preceq_Q = \preceq_{Q_i}$.
\end{defi}
\begin{prop}
Poset subtraction is an $\preceq$-derivation.
\end{prop}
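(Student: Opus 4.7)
The plan is to unfold the definitions and reduce the claim to the postulated algebraic property of the subtraction operator $-_Q$ that was assumed at the start of Section~5.3, in exactly the same spirit as the proofs for $\mathit{sub}_{p,c}$ (Def.~\ref{def:subdiv}) and for the producer-consumer constraint (Def.~\ref{def:pc}).

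Concretely, I start from arbitrary $(q_1,\ldots,q_n), (q_1',\ldots,q_n') \in \mathcal{S}$ with $(q_1,\ldots,q_n) \preceq (q_1',\ldots,q_n')$, which by Def.~\ref{def:dominance} gives componentwise $q_k \preceq_{Q_k} q_k'$ for every $k$; in particular $q_i \preceq_{Q_i} q_i'$ and $q_j \preceq_{Q_j} q_j'$. The key translation step is to use the hypothesis $\preceq_{Q_i} = \preceq_{Q_j}^{-1}$ to rewrite the $j$-th inequality in the common poset $Q_i$: from $q_j \preceq_{Q_j} q_j'$ we conclude $q_j' \preceq_{Q_i} q_j$, i.e., $q_j \succeq_{Q_i} q_j'$. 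This mirrors the alternating-interpretation trick used in Def.~\ref{def:subdiv}, where the producer dimension is ordered by $\le$ and the consumer dimension by $\ge$.

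With both inequalities now expressed in $Q_i$, I apply the assumed property of the generalized subtraction, namely $a \preceq_{Q_i} c \land b \succeq_{Q_i} d \Rightarrow a -_{Q_i} b \preceq_{Q_i} c -_{Q_i} d$, instantiated with $a = q_i$, $c = q_i'$, $b = q_j$, $d = q_j'$. This yields $q_i -_{Q_i} q_j \preceq_{Q_i} q_i' -_{Q_i} q_j'$, which by the definition of $-_{i,j}$ and of $\preceq_Q = \preceq_{Q_i}$ is exactly $-_{i,j}(q_1,\ldots,q_n) \preceq_Q -_{i,j}(q_1',\ldots,q_n')$. By Def.~\ref{def:derivations}, $-_{i,j}$ is therefore an $\preceq$-derivation.

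There is no real obstacle here; the proof is almost a one-liner once one has unpacked the three pieces of data (dominance on $\mathcal{S}$, the order-reversal $\preceq_{Q_i} = \preceq_{Q_j}^{-1}$, and the assumed monotonicity law for $-_Q$). The only point that requires a moment of care is the direction switch when moving the $j$-th inequality from $Q_j$ to $Q_i$ — missing this is what would make the proof fail, and it is also the conceptual reason why $-_{i,j}$ is well-behaved precisely under the alternating hypothesis $\preceq_{Q_i} = \preceq_{Q_j}^{-1}$ rather than $\preceq_{Q_i} = \preceq_{Q_j}$.
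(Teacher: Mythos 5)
Your proposal is correct and follows essentially the same route as the paper's own proof: unpack dominance componentwise, use $\preceq_{Q_i} = \preceq_{Q_j}^{-1}$ to flip the $j$-th inequality into $Q_i$, and invoke the assumed monotonicity law $a \preceq_{Q} c \land b \succeq_{Q} d \Rightarrow a -_Q b \preceq_{Q} c -_Q d$. Nothing is missing.
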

\begin{proof}
Let $(q_1, \ldots, q_n), (q_1',\ldots,q_n') \in \mathcal{S}$ and let $(q_1, \ldots, q_n) \preceq (q_1',\ldots,q_n') $.
Then we have that $-_{i,j}(q_1, \ldots, q_n) = q_i -_{Q_i} q_j$ and $-_{i,j}(q_1', \ldots, q_n') = q_i' -_{Q_i} q_j'$.
We must show that $q_i -_{Q_i} q_j \preceq_{Q_i} q_i' -_{Q_i} q_j'$.
From our assumptions, we have that  $q_i \preceq_{Q_i} q_i'$ and $q_j \preceq_{Q_j} q_j'$.
Since $Q_i = Q_j$ and $\preceq_{Q_i} = \preceq_{Q_j}^{-1}$, we have that $q_j \succeq_{Q_i} q_j'$.
With the definition of $-_{Q_i}$, this gives that $q_i -_{Q_i} q_j \preceq_{Q_i} q_i' -_{Q_i} q_j'$.
\end{proof}

With these operations on posets, we can then define free, horizontal and vertical aggregation as follows (where the finishing steps, minimization and abstraction of the twelve dimensions and filtering on parameters or quality constraints, are omitted):
\begin{align*}
\parallel \quad & \definitie &
 +_{6,12} \circ +_{5,11} \circ +_{4,10} \circ +_{3,9} \circ +_{2,8} \circ +_{1,7} \circ \times \\
\Rightarrow \quad & \definitie &
 +_{6,12} \circ +_{5,11} \circ +_{4,10} \circ +_{3,9} \circ ( \downarrow 14 \circ +_{8,14} \circ -_{2,7}) \circ \mathit{copy}_1 \circ \cap D_{2,7,\mathbf{id}} \circ \times \\
\Uparrow \quad & \definitie &
 +_{6,12} \circ +_{5,11} \circ  ( \downarrow 16 \circ +_{10,16} \circ -_{4,9}) \circ \mathit{copy}_3 \circ +_{2,8} \circ +_{1,7} \circ \cap D_{4,9,\mathbf{id}} \circ \times
\end{align*}

Horizontal aggregation is only defined if the output type of the first component is equal to the input type of the second component, and if the orderings on input and output are reversed (due to the requirement of the poset subtraction).
Our definition is such that the input of the second component is completely covered by the output of the first component (the producer-consumer constraint). What remains of the output of the first component is added to the output of the second component resulting in the output of the aggregation.
The definition of vertical aggregation is similar, only the roles of input and output are played by required and provided budget, respectively.
The three aggregation templates are shown schematically in Fig.~\ref{fig:free}, Fig.~\ref{fig:horizontal} and Fig.~\ref{fig:vertical}, respectively.
Note that these are specific example templates of how we can define aggregation with the addition and subtraction operations on posets. For instance, the specific definitions for the consumption of output and provided budget is a choice. We can imagine that there are situations where variations of these aggregation templates are needed. These can then be defined with the flexible mathematical tools that we have presented.

\begin{figure}
\centering
\includegraphics[width=0.7\linewidth]{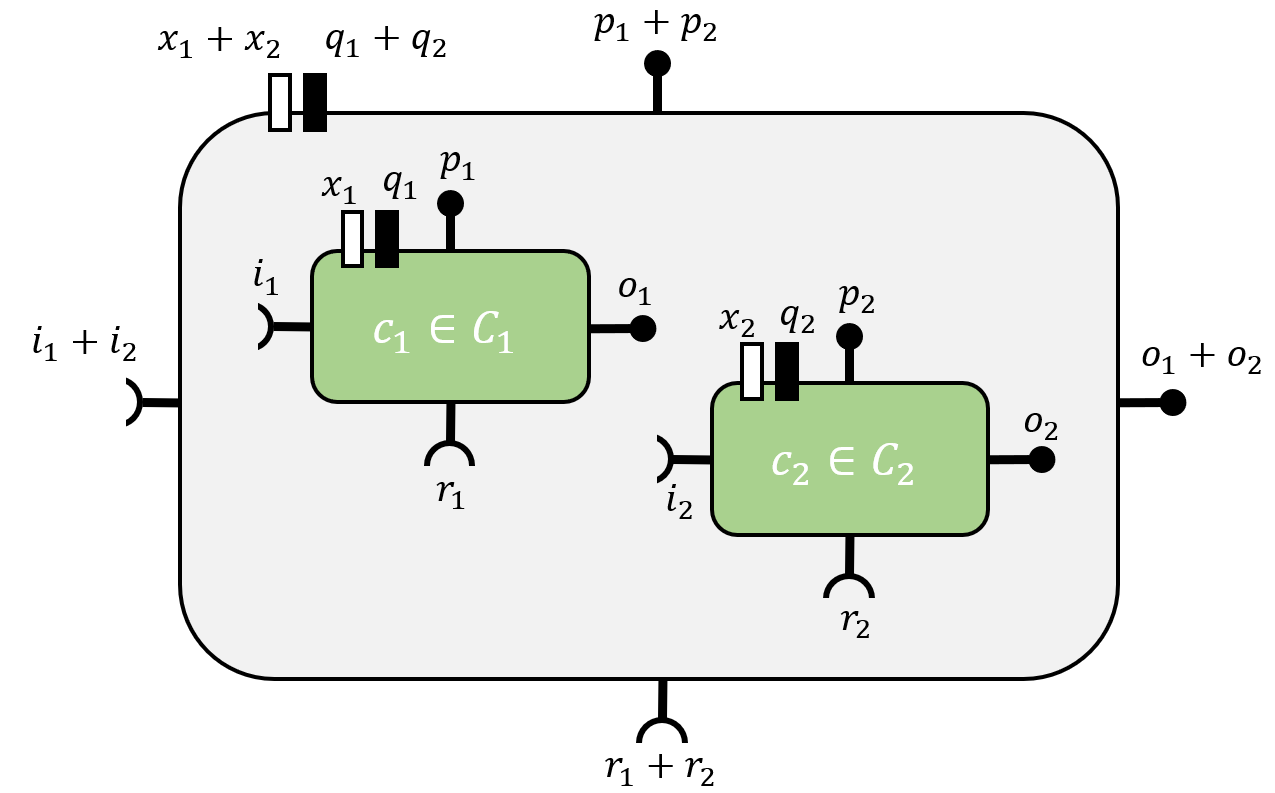}
\caption{Abstract representation of free aggregation.}\label{fig:free}
\end{figure}

\begin{figure}
\centering
\includegraphics[width=0.7\linewidth]{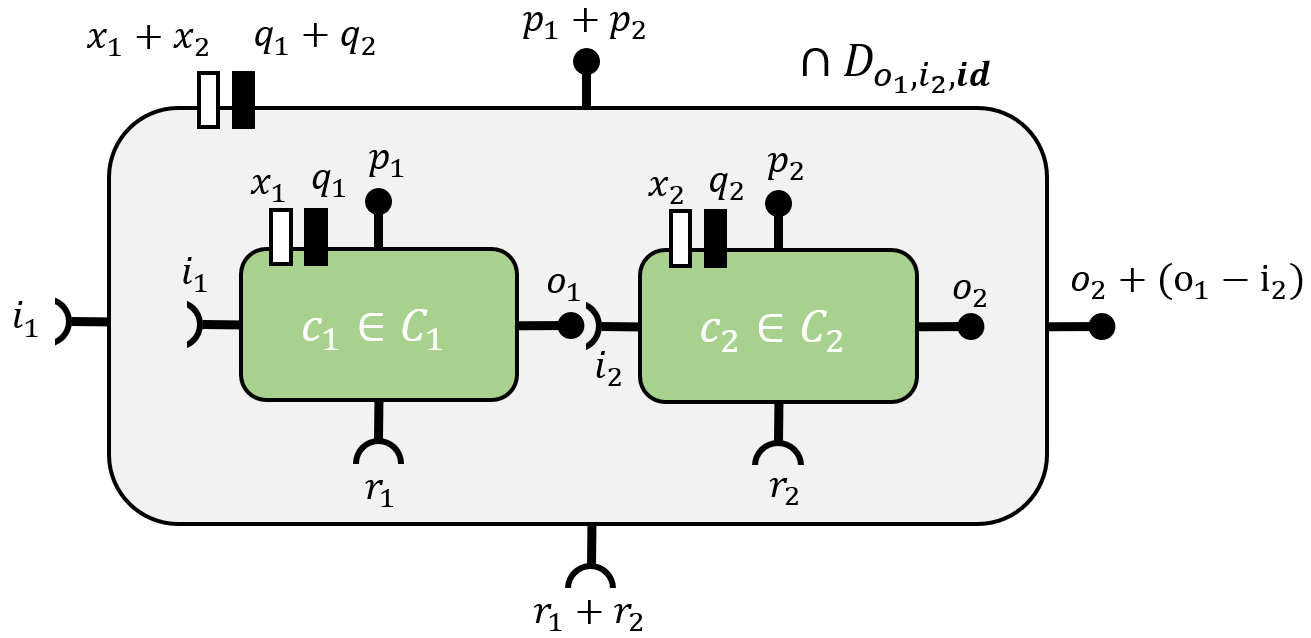}
\caption{Abstract representation of horizontal aggregation.}\label{fig:horizontal}
\end{figure}

\begin{figure}
\centering
\includegraphics[width=0.7\linewidth]{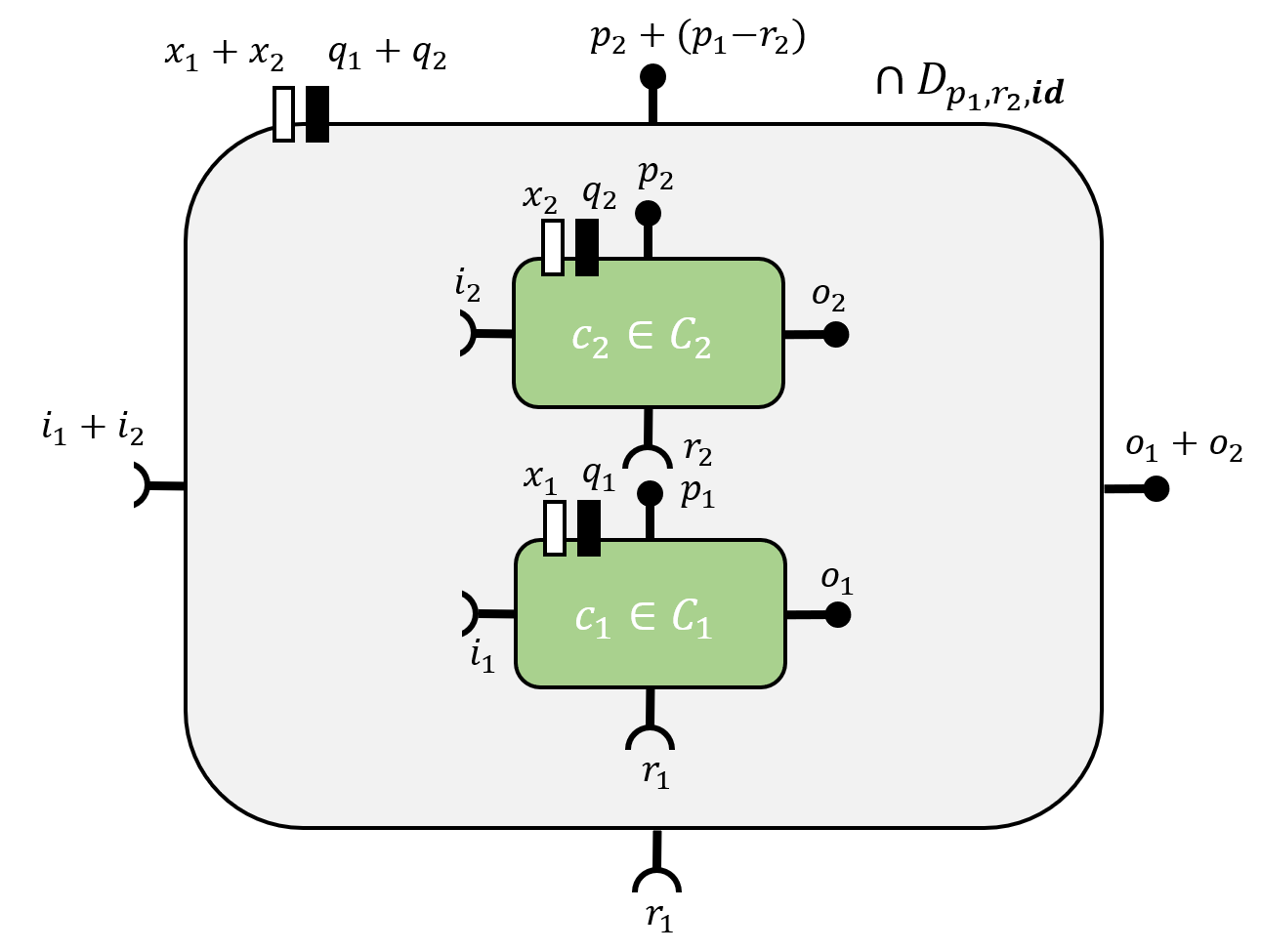}
\caption{Abstract representation of vertical aggregation.}\label{fig:vertical}
\end{figure}

When not everything is known at design time, which is the case in our example because the video streams that need processing are only known at runtime, an online quality and resource manager is needed.
In the following section we develop such a manager which is fully tailored to the video-processing example and implemented in the Java programming language.


\section{Quality and resource management for the video-processing system}\label{sec:qrm}

In this section, we put all theory into practice and create a Java solution for the quality- and resource-management problem as described in Ex.~\ref{ex:1}.
We emphasize that this is not a general approach to quality and resource management, but an example of how the mathematical tools that we have provided can be used in practice to solve a specific problem.
The first step was to implement the posets that we need, e.g., \emph{video}, \emph{bandwidth},  etcetera (see Fig.~\ref{fig:ex1}).
We capture the posets in the \texttt{Value} type (a Java interface), which provides ordering methods to compare the \texttt{Value} with another \texttt{Value}, and also the $+$ and $-$ operations defined in Sec.~\ref{sec:aggr-patterns}.
We have an abstract \texttt{NatValue} implementation of this interface for natural numbers, and several extensions of \texttt{NatValue} to model the concrete types that we need, e.g., \emph{bandwidth}, \emph{hres}, \emph{vres}, etcetera.
Furthermore, we have a \texttt{Tuple} implementation of the \texttt{Value} interface that realizes tuples of arbitrary length and nesting, with other \texttt{Value} instances at its leafs.
We have several extensions of this \texttt{Tuple} type to model the multi-dimensional base types that we need, e.g., \emph{video} and \emph{scalers}.

As second step, we implemented the Pareto-algebraic machinery specialized to quality and resource management. We  defined an interface \texttt{ConfigurationSet} that components can implement. It has a single method to get all configurations of the component, which are implementations of the \texttt{Configuration} interface. This type provides the methods to obtain \texttt{Value} instances for input, output, etcetera.
Furthermore, we  implemented all Pareto-algebraic operations that we need, e.g., minimization, abstraction, etcetera. We  also implemented the derived functions that we need, e.g., the free, horizontal and vertical aggregation using the $+$ and $-$ operations on the posets, as described above.

As a third step, we  created an application-specific QRM algorithm to map Transport and Scaler components for $n$ streams on $k$ Fiber and $k$ HW scaler components using appropriate VEP components.
We partitioned the platform resources into execution-platform components. Each is a free aggregation of a Fiber and a HW scaler component.
A stream is modeled by a single application component, which is the horizontal aggregation of a Transport component with a Scaler component, see Ex.~\ref{ex:comp2}.
The configurations of the Scaler are determined as follows:
\begin{itemize}[label=$\triangleright$]
\item The input can be FHD, HD+, HD, qHD (960x540 pixels) or nHD (640x360 pixels) with a rate of 90, 60, 30, 20, 15, 10, 6, or 5~Hz.
\item The output can be as the input, provided that the resolution of the output is smaller than the resolution of the input, and the rate of the output divides the rate of the input.
\item The required \emph{scaling} budget consists of two parts. The \emph{comp} part is calculated as the input pixel rate plus the output pixel rate.
The \emph{segs} part equals the horizontal resolution divided by 128 (rounded up).
\item The quality equals the output frame rate.
\item The parameters equal the video parameters of the output.
\end{itemize}
This results in 270 configurations for each Scaler component.

Every configuration of an application component has well-defined required budgets (determined during system design).
The QRM aggregates from the provided physical platform budgets VEPs that provide budgets matching the required application budgets.
The QRM also computes the bindings of the VEPs to the physical resources (Fiber/HW scaler compositions in this example).

\begin{algorithm}[!th]
\caption{$\mathit{VideoProcessingQRM}(k, (v_1, x_1) \ldots, (v_n, x_n), \mathit{cost})$}\label{alg:opt}
\begin{algorithmic}[1]
\STATE{$R \subseteq \mathbb{N}^n := \emptyset$}
\FOR{$i = 1$ \textbf{to} $k$}
	\STATE{$\mathit{ex}_i := f \parallel hw$}
\ENDFOR
\FOR{$i = 1$ \textbf{to} $n$}
	\STATE{$\mathit{app}_i :=( t \Rightarrow sc) \cap D_{6,\{(v_i,x_i)\}} $}
	\STATE{create Connection $c_i$}
	\STATE{create Virtual scaler $\mathit{vsc}_i$}
	\STATE{$\mathit{vep}_i := c_i \parallel \mathit{vsc}_i$}
	\STATE{$\mathit{va}_i := \mathit{vep}_i \Uparrow \mathit{app}_i$}
\ENDFOR
\FORALL{mappings $m : \{ 1, \ldots, n\} \to \{ 1, \ldots, k \}$}
	\FOR{$i = 1$ \textbf{to} $n$}
		\STATE{$\mathit{ex}_{m(i)} := \mathit{ex}_{m(i)} \Uparrow \mathit{va}_i $}
	\ENDFOR
	\STATE{$S := \pi \circ \mathit{min} ( \downarrow 2 \circ (\downarrow 1)^4 \circ (\mathit{ex}_1 \parallel \ldots \parallel \mathit{ex}_k))$}
	\STATE{$R := \mathit{min} ( R \cup S)$}
\ENDFOR
\RETURN{select one configuration from $\mathit{min}(\mathit{cost}(R))$}
\end{algorithmic}
\end{algorithm}

Algorithm~\ref{alg:opt} shows the QRM algorithm for the video-processing system.
This algorithm explores all configurations of the aggregation of stream and platform components, as illustrated in Fig. ~\ref{fig:ex1}, and returns a cost-optimal configuration as a solution to the given multi-objective optimization problem.
The algorithm  is assumed to have access to all the interface models of the application and platform components (Transport ($t$), Scaler ($sc$), Fiber ($f$), HW Scaler ($hw$)).
It has as parameters the number of execution-platform components $k$, $n$ tuples $(v_i, x_i)$, where $v_i$ is a \emph{video} value that specifies the resolution and rate of the incoming stream $i$, and $x_i$ is a \emph{video params} value that specifies the required output resolution after scaling of stream $i$, and a $\mathit{cost}$ function.
The cost function is an $\preceq$-derivation on the configuration space $\mathbb{N}^n$ that captures the output rate of each stream.
The Pareto-minimal solutions are captured in set $R$, which is initialized on line 1.
It contains the quality of each stream (the output rate in Hz).
Lines 2--4 initialize the $k$ execution-platform components.
Lines 5--11 initialize the $n$ application components and their VEPs.
In line 6, we apply a subset constraint to select the proper parameters for the stream.
The proper VEP for the application component is computed in lines 7--9, and in line 10 we create the vertical aggregation of the application and the VEP component.
The for-loop in lines 12--18 iterates over all possible mappings of the $n$ application components on the $k$ execution-platform components.
An entry $m$ gives for an application component $i$ the execution-platform component $m(i)$ to which it is {bound} via a VEP.
Lines 13--15 then {bind} the application-VEP combinations to the execution-platform components using vertical aggregation.
In each iteration of this loop, one application-VEP combination is bound to the (remaining) budget of the platform component on which it is mapped.
After this loop, $\mathit{ex}_i$ is an aggregation of an execution-platform component with zero or more application-VEP components.
Line 16  combines all these $\mathit{ex}_i$ aggregations through free aggregation.
The abstractions filter out everything but the qualities of the streams.
A permutation $\pi$ is applied such that the quality of stream $i$ is on the $i$-th position in the resulting configuration space.
These configurations then are combined with the alternatives of other {mappings} in line 17.
Finally, line 19 applies the cost function and selects one of the optimal solutions.

In principle, we enumerate all possible mappings of streams onto hardware components. Of course, this can be a huge number, and therefore we have applied \emph{symmetry reduction}~\cite{symred}.
Since the execution-platform components are totally interchangeable, we can define an equivalence relation on the {mappings} $m$ in line 12, and we only need to explore one aggregation per equivalence class.
For instance, if we have two hardware components, say $\mathit{ex}_1$ and $\mathit{ex}_2$,
and two streams as in line 10 of the algorithm, say $\mathit{va}_1$ and $\mathit{va}_2$,
then the mapping $\mathit{ex}_1 \Uparrow \mathit{va}_1$ and $\mathit{ex}_2 \Uparrow \mathit{va}_2$
gives the same result as the mapping $\mathit{ex}_2 \Uparrow \mathit{va}_1$ and $\mathit{ex}_1 \Uparrow \mathit{va}_2$
since $\mathit{ex}_1$ and $\mathit{ex}_2$ are interchangeable.
We can also apply symmetry reduction to the streams by assuming that streams of the same type (determined by their input and required output) are completely interchangeable.
Of course, the soundness of this second reduction depends on the cost function that is ultimately used.
For instance, one stream may be more important for the user than another stream with equal parameters.
This difference in priority is then typically expressed in the cost function, and clearly such streams are not interchangeable.
Note that this application of symmetry reduction is an ad-hoc optimization specific for this case and the example QRM as shown in Alg.~\ref{alg:opt}.
A systematic approach to symmetry reduction is, for now, out of scope of our mathematical framework.
In Sec.~\ref{sec:qrml}, we introduce a domain-specific language for the specification of the QRM interfaces of components and
their composition. We can imagine that further developments of this language may support syntactic constructs that indicate
symmetries in the search space.

Now consider the situation where $n=6$, three HD+@90 to HD streams and three HD@60 to qHD streams, and $k=3$.
We use the cost function that maximizes the minimal rate.
First, we observe the effect of symmetry reduction. Without symmetry reduction, we have 690 different {mappings}. Using symmetry reduction for the three execution-platform components results in 115 possibilities. Also using symmetry reduction for streams of the same \emph{video} type yields only 18 possibilities.
With or without stream symmetry reduction, we obtain a single configuration with the following quality at the end of the algorithm:
\[
( 90, 90, 90, 60, 60, 60 )
\]

This thus shows that we can process all six streams without rate loss.
Each execution-platform component {hosts} an HD+@90 stream and an HD@60 stream.
Now assume that we need to additionally scale a qHD@60 stream to nHD.
The number of {mappings} with only symmetry reduction for the execution-platform components is 315, and the number of possibilities when also stream symmetry reduction is used, is 40.
Optimization with both forms of symmetry reduction results in three Pareto-minimal configurations:
\begin{align*}
c_1 = (30 , 90 , 90 , 60 , 60 , 60 , 60 ) \\
c_2 = (90 , 90 , 90 , 20 , 60 , 60 , 20 ) \\
c_3 = (90 , 90 , 90 , 15 , 60 , 60 , 30 )
\end{align*}
All three configurations come from the same {mapping}: two execution-platform components process an HD+@90 and an HD@60 stream each, and the third additionally processes the qHD@60 stream.
The cost function now selects $c_1$ since it is the unique optimum configuration, i.e., it has a higher minimum output rate of 30 Hz among the streams  than the other two configurations.
Optimization without symmetry reduction for streams yields nine Pareto-minimal configurations at the end of line 18.
Each of the three configurations above  comes in three variations that correspond to permutations of the streams of the same type.

With seven streams in the system, and 270 configurations for each Scaler, the raw number of possible configuration combinations equals $270^7 = 104,603,532,030,000,000$.
The total number of combinations of $\mathit{app}_i$ configurations (line 6 of the algorithm; determined by the possible output rates for each stream) equals $6^7 = 279,936$.
Clearly, a naive approach to optimization can end badly, that is, not end at all for all practical purposes.
Using the mathematical framework from this paper, our Java QRM implementation optimizes the seven streams using symmetry reduction in approximately $200$ ms on a standard laptop computer (40 mapping possibilities). Without stream symmetry reduction, the average execution time is around $800$ ms (315 possibilities).
We believe that the implementation can still be improved significantly to reduce the running time. Our conclusion therefore is that our framework gives apt mathematical tools for runtime quality and resource management.


\section{The QRML specification language}\label{sec:qrml}
In \cite{qrml}, we have introduced the Quality- and Resource-Management Language (QRML) and toolset.
In this section, we illustrate (an improved version of) the component language for specification of QRM interfaces and their composition
(online tooling available at \url{https://qrml.org}).
Three fundamental parts of the mathematical framework presented in this paper are (i) partially-ordered sets to define the configuration spaces of the components, (ii) composition as an aggregation of alternatives, and (iii) optimization. In the next subsections, we explain how these are supported by the QRML toolset.

\subsection{Partially-ordered sets in QRML}
The language provides primitive sets of values (integers and reals) and enumerations, and supports the construction of combinations of these sets.
Each set of values has a default partial order and the language provides means to manually specify a partial order (via lambda expressions).
A QRML type is a set of values and its partial order, and these types  are used to specify the configuration spaces of components.
By convention, every type in QRML has a least and greatest element, denoted in the syntax by \texttt{bot} and \texttt{top} respectively.

\begin{figure}[!h]
\centering
\includegraphics[width=.9\linewidth]{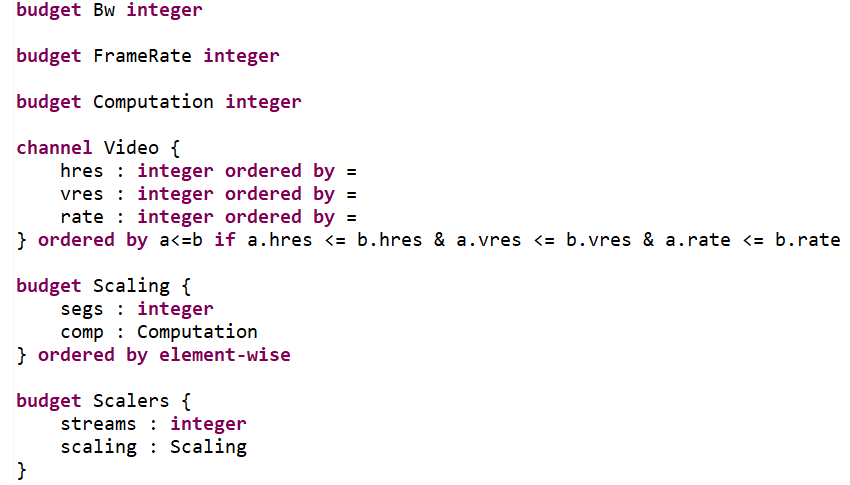}
\caption{Type definitions for the video-processing system in QRML.}\label{fig:qrml-types}
\end{figure}

Some of the types for the video-processing example from Sec.~\ref{sec:ex} are shown in Fig.~\ref{fig:qrml-types}.
A type definition starts with
one of the keywords \texttt{typedef}, \texttt{channel} or \texttt{budget}.
The \texttt{Bw}, \texttt{FrameRate} and \texttt{Computation} budgets are specified as a partial order consisting of the integers and the (default) $\le$ relation.
The \texttt{Video} channel type is a combination of three integers for the horizontal resolution, vertical resolution and frame rate, respectively.
Note that the partial-order relations for each of these three parts is the equality relation on integers.
This implies that, in this model, the values must match exactly and a higher provided resolution is not accepted to match a lower required resolution.
The \texttt{ordered~by} clause gives a predicate that specifies how two elements of the \texttt{Video} type are ordered by the partial-order relation.
This definition uses the partial-order relations of the parts.
Thus, $(h_1,v_1,r_1) \le (h_2,v_2,r_2)$ if and only if $h_1=h_2$ and $v_1 = v_2$ and $r_1 = r_2$.
The \texttt{Scaling} budget also is a combination in which the \texttt{segs} part is ordered by the $\le$ relation on the integers, and the ordering of the \texttt{comp} part is given by the order on the \texttt{Computation} budget.
The \texttt{element-wise} ordering is short-hand for the relation defined similarly as for the \texttt{Video} type (which can alternatively be defined with the \texttt{element-wise} clause).
Finally, the \texttt{Scalers} budget shows another level of nesting of types.
Because the order is not explicitly defined for the \texttt{Scalers} budget, its partial-order relation is the default element-wise comparison using the partial-order relations of the parts.
Note that the inclusion of default orders in the semantics of QRML implies that the \texttt{ordered~by} clauses for the \texttt{Video} and \texttt{Scaling} types could have been omitted. They are given here for illustration purposes.

\subsection{Components and their composition in QRML}
The types can be used to specify QRML components. A component declaration consists of three elements: port declarations, subcomponent declarations, and constraint declarations.
Components have zero or more {ports}  for each QRM part, and each port has a type.
A port is declared by one of the keywords \texttt{provides}, \texttt{requires}, \texttt{input}, \texttt{output}, \texttt{quality}, or \texttt{parameter}.
Multiple ports for the same part are combined in a multi-dimensional poset for that part using the free product.
QRM parts with no explicit port specification default to the void poset.
The port declarations of a component thus specify its configuration space.
Components can have zero or more subcomponents.
Subcomponents are declared by the \texttt{contains} keyword, and refer to components declared elsewhere.
A \texttt{contains} statement can also offer alternatives, e.g., the component contains subcomponent $A$ or subcomponent $B$.
Finally, components can specify a number of {constraints} on the ports of the component and its subcomponents. These constraints are expressions on the port variables, and are specified between curly brackets after port declarations, via the \texttt{from} keyword after port declarations, or via the \texttt{constraint} keyword.

\begin{figure}
\centering
\includegraphics[width=\linewidth]{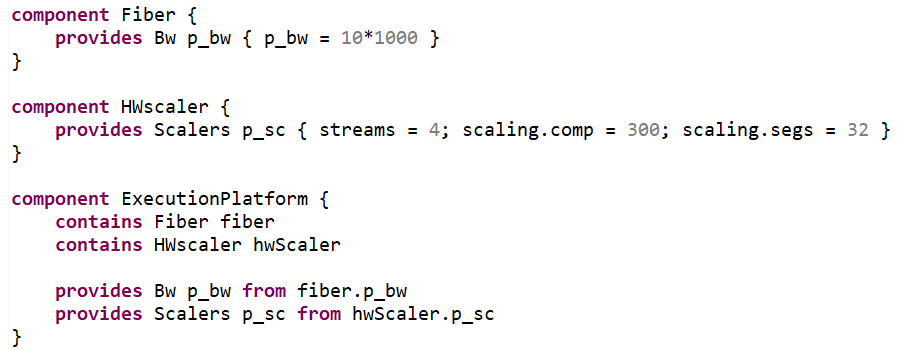}
\caption{Aggregation: the execution platform.}\label{fig:qrml-components}
\end{figure}

For instance, Fig.~\ref{fig:qrml-components} shows how the execution platform of the example can be specified in QRML with the aggregation pattern.
The \texttt{Fiber} and \texttt{HWscaler} components are atomic: they are not built from other components.
They each have a single port and constraints on the values of their port are specified between curly brackets.
The \texttt{ExecutionPlatform} component  has two subcomponents declared by the \texttt{contains} keyword: a  \texttt{Fiber} and a \texttt{HWscaler}.
Furthermore, it declares two ports for the provided part of the QRM interface.
The configuration spaces of the three components are the following:
\[
\mathcal{S}_f = V \times V \times V \times  (\mathbb{Z}\cup \{\bot, \top \}) \times V \times V
\]
\[
\mathcal{S}_h = V \times V \times V \times (\mathbb{Z}\cup \{\bot, \top \})^3 \times V \times V
\]
\[
\mathcal{S}_e = V \times V \times V \times  (\mathbb{Z}\cup \{\bot, \top \})^4 \times V \times V
\]
(where $V = \{ \bot \}$ is the void poset that uses the $=$ partial order).
Both the \texttt{Fiber} and the \texttt{HWscaler} have a single configuration which is specified by their respective port constraints. Their configuration sets are the following:
\[
\begin{array}{l}
C_f = \{ (\bot, \bot, \bot, 10000, \bot, \bot) \} \subseteq \mathcal{S}_f\\
C_h = \{ (\bot, \bot, \bot, (4, 300, 32), \bot, \bot) \} \subseteq \mathcal{S}_h
\end{array}
\]

The \texttt{ExecutionPlatform} specifies an aggregation. We use the constraint-based pattern for its semantics.
The two \texttt{from} constraints relate the provided budget of the aggregate to the provided budgets of the constituents. The aggregation formally is the following instance of Eq.~\ref{eq:aggr2}:
\[
\mathit{min} \circ
(\downarrow 1)^{12} \circ
\cap \{ (x_1, \dots, x_{18}) \,|\, x_{16} = (x_4, x_{10}) \} ( C_f  \times C_h \times \mathcal{S}_e )
\]
Here, $x_{16}$ is the provided budget of the  \texttt{ExecutionPlatform}, $x_4$ is the provided budget of the  \texttt{Fiber}, and $x_{10}$ is the provided budget of the  \texttt{HWscaler}.
In this case, we thus obtain a single configuration for the \texttt{ExecutionPlatform} component with only a composite provided budget:
\[
C_e = \{ (\bot, \bot, \bot, (10000, 4, 300, 32), \bot, \bot) \} \subseteq \mathcal{S}_e
\]

\begin{figure}[htb]
\centering
\includegraphics[width=\linewidth]{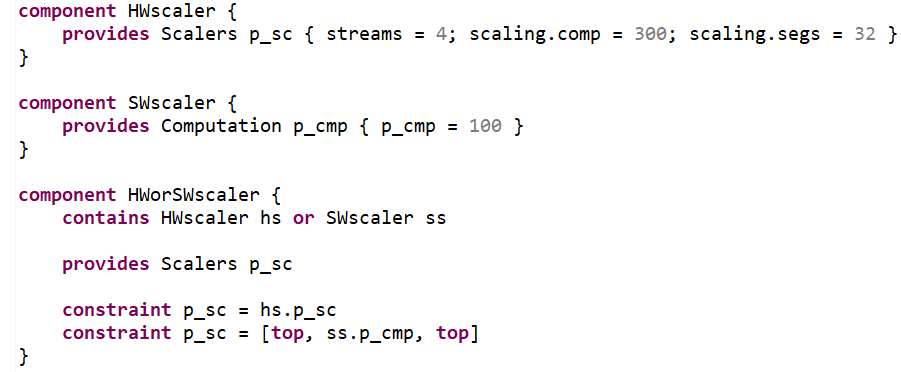}
\vspace*{-3ex}
\caption{Alternatives: a choice between components.}\label{fig:qrml-alt}%
\end{figure}

We use Ex.~\ref{ex:alt} to explain the alternatives construction in QRML.
It introduces the \texttt{SWscaler} component that only provides the \texttt{Computation} budget. The \texttt{HWorSWscaler} component models the choice between the \texttt{HWscaler} and \texttt{SWscaler}. Fig.~\ref{fig:qrml-alt} shows the three relevant components.
The configuration spaces for the \texttt{HWscaler}, \texttt{SWscaler} and \texttt{HWorSWscaler} follow from the port declarations and are the following:
\[
\mathcal{S}_h = V \times V \times V \times (\mathbb{Z} \cup \{\bot, \top \} )^3 \times V \times V
\]
\[
\mathcal{S}_s = V \times V \times V \times (\mathbb{Z}\cup \{\bot, \top \}) \times V \times V
\]
\[
\mathcal{S}_{hs} = V \times V \times V \times (\mathbb{Z}\cup \{\bot, \top \})^3 \times V \times V
\]
The \texttt{contains or} construction in the \texttt{HWorSWscaler} means that this component provides a choice between the \texttt{HWscaler} and \texttt{SWscaler}, i.e., it follows the alternatives pattern.
The configuration sets of the  \texttt{HWscaler} and \texttt{SWscaler} are defined as in Ex.~\ref{ex:alt} and are denoted by $C_{h}$ and $C_{s}$, respectively. Both consist of a single configuration.
To compute the configurations of the \texttt{HWorSWscaler}, we take the alternatives of the two constituent components.
The two constraints give the normalization constraints for Eq.~\ref{eq:alt2}. The alternatives operation formally has the following semantics:
\[
\begin{array}{rl}
(1) & \mathit{min} \circ (\downarrow 1)^{6} \circ \cap \{ (x_1,\ldots,x_{12})\,|\, x_{10} = x_4 \} (C_{h} \times \mathcal{S}_{hs})  \\
(2) & \mathit{min} \circ (\downarrow 1)^{6} \circ \cap  \{ (x_1,\ldots,x_{12})\,|\, x_{10} = (\top, x_4, \top) \} \ (C_{s} \times \mathcal{S}_{hs})  \\
(3) & \mathit{min} \circ \cup
\end{array}
\]
Here, $x_{10}$ is the provided budget of the  \texttt{HWorSWscaler}, and $x_4$ is the provided budget of the  \texttt{HWscaler} in step 1 and the provided budget of the  \texttt{SWscaler} in step 2.
This results in two configurations for the \texttt{HWorSWscaler}:
\[
S_{hs} = \{ (\bot,\bot,\bot, (\top,100,\top), \bot, \bot) , (\bot,\bot,\bot, (4, 300, 32), \bot, \bot) \}
\]

\subsection{Optimization}
We have defined the semantics of a QRML model in terms of configuration sets and Pareto-algebraic operations on these sets.
Each component either is an atomic component (without subcomponents), a choice between subcomponents, or an aggregation of subcomponents.
The first option directly gives the configuration set of the component through its configuration space and the constraints.
The latter two options give the configuration set of the component through instances of the alternatives and aggregation patterns of Eq.~\ref{eq:alt2} and Eq.~\ref{eq:aggr2}, respectively.
The Pareto-minimal configuration set of a component is sufficient and necessary for optimization with respect to an arbitrary cost function (i.e., an arbitrary $\preceq$-derivation).
The actual computation of configuration sets of components is delegated to external tools through proper language transformations (this is work in progress).
For instance, if all used constraints preserve dominance, then we can use Pareto-minimal configuration sets during all computations. This is a good match for a tool such as the Pareto calculator \cite{pareto-calc}. The Pareto-algebraic operations can be implemented sufficiently efficiently for on-line use. QRML specifications may then serve to configure a quality and resource manager, as the one introduced in the previous section.
An alternative for design-time optimization, which does not require dominance preservation of operations, is a constraint-based formulation suitable for a solver such as Z3 \cite{z3}. This formulation can then be used in an optimization approach such as presented in \cite{kneepoints} to approximate the Pareto-minimal set of configurations.


\section{Conclusions}\label{sec:conclusions}
We have presented an interface-modeling framework for quality and resource management of component-based systems.
The QRM interface of components is defined in terms of partially-ordered sets, and is founded on the Pareto Algebra of~\cite{pareto-algebra}.
Our framework supports compositional refinement in the sense that a component interface $C$ can safely be replaced with a component implementation $C'$ if the latter dominates the former, i.e., if $C \preceq C'$.
A running example of a video-processing system has shown how we can apply a number of basic operations (formally defined as functions) to create meaningful component compositions.
The framework enables runtime quality and resource management in embedded and cyber-physical systems.
Although we have defined several generally applicable composition patterns, typically domain or even application-specific knowledge is needed. This is certainly also true for the partially-ordered sets that form the foundation, and for the composition of the basic functions to model the composition of components.
The framework that we have presented is therefore an abstract, conceptual framework. Concrete posets, constraints, derivations and ways of composition must be created for each application domain such that the semantics of that domain are reflected properly.
We have shown how the QRML modeling language (\url{https://qrml.org}) can be used to specify such domain-specific quality and resource management models. QRML is under further development.

\section*{Acknowledgements}
Twan Basten had the pleasure of being one of the first PhD students of Jos Baeten in Eindhoven. This experience had a lasting impact on his scientific work.
Algebra is never far away. Twan and Jos have always shared an interest in describing and analyzing complex systems in mathematically rigorous and elegant ways.
In a sense, the work in this paper found its inspiration already in Twan's PhD research. Throughout his career, Jos combined research on systems engineering with managing research organizations. Both involve balancing a variety of resources and qualities to achieve desired goals. An algebra for managing qualities and resources thus seems a fitting contribution to this liber amicorum. 

\bibliographystyle{myalpha}
\bibliography{refs}

\end{document}